  \newcommand{\note}[2]{\marginpar{\textcolor{blue}{#1} \color{red}\footnotesize{ #2}\color{black}}}%
\newcommand{\note}[2]{}}
\newcommand\rmd{\mathrm{d}}
\newcommand\mcf{\mathcal{F}}
\newcommand{\Yset}{{\mathcal{Y}}}
\newcommand{\nbt}{{d_\theta}} 
\newcommand{\nbz}{{d_z}} 
\newcommand{\logl}[2]{\log g(#1;#2)}
\newcommand{\kullback}[2]{\mathrm{K} \left( #1 \left\| #2 \right. \right)}
\newcommand{\lyapunov}{\mathrm{w}}
\newcommand{\meanfield}{\mathrm{h}}
\newcommand{\mixtureweight}[1]{\omega_{#1}}
\newcommand{\mixtureparam}[1]{\lambda_{#1}}
\newcommand{\bmixtureind}[1]{\bar{w}_{#1}}
\newcommand{\hmixtureweight}[1]{\hat{\omega}_{#1}}
\newcommand{\hmixtureparam}[1]{\hat{\lambda}_{#1}}
\newcommand{\FIMcomplete}[3][]%
{\ifthenelse{\equal{#1}{}}{I_{#2}(#3)}{I^{#1}_{#2}(#3)}}
\newcommand{\FIMcond}[2]{J(#1; #2)}
\newcommand{\FIMincomplete}[2][]%
{\ifthenelse{\equal{#1}{}}{I_{\text{obs}}(#2)}{I^{#1}_{\text{obs}}(#2)}}
\newcommand{\funcEM}[3]{Q_{#3}(#1;#2)}
\newcommand{\htheta}{\hat{\theta}}
\newcommand{\functheta}[2][]
{\ifthenelse{\equal{#1}{}}{\bar{\theta}\left(#2\right)}{\bar{\theta}^{#1}\left(#2\right)}}
\def\fx{f}
\def\lx{\log \fx}
\def\fy{g}
\def\dly{{\nabla_{\theta}\log g}}
\def\ls{\ell}
\newcommand\pscal[2] { \left\langle #1 , #2 \right\rangle }
\newcommand{\suffstat}{S}
\newcommand{\condexpsuffstatletter}{\bar{s}}
\newcommand{\condexpsuffstat}[3][]
{\ifthenelse{\equal{#1}{}}{{\condexpsuffstatletter(#2;#3)}}{\left|\condexpsuffstatletter(#2;#3)\right|^{#1}}}
\newcommand{\hcondexpsuffstat}{\hat{s}}
\def\calS{\mathcal{S}}
\def\calK{\mathcal{K}}
\def\rset{\mathbb{R}}
\newcommand{\eqsp}{\;}
\newcommand{\eqdef}{\ensuremath{\stackrel{\mathrm{def}}{=}}}
\newcommand{\rme}{\mathrm{e}}
\def\1{\mathbbm{1}}
\newcommand{\PE}{\mathbb{E}}
\newcommand{\PP}{\mathbb{P}}
\newcommand{\CPE}[3][]
{\ifthenelse{\equal{#1}{}}{\mathbb{E}\left[\left. #2 \, \right| #3 \right]}{\mathbb{E}_{#1}\left[\left. #2 \, \right | #3 \right]}}
\newcommand{\CPP}[3][]
{\ifthenelse{\equal{#1}{}}{\mathbb{P}\left(\left. #2 \, \right| #3 \right)}{\mathbb{P}_{#1}\left(\left. #2 \, \right | #3 \right)}}
\newtheorem{theorem}{Theorem}
\newtheorem{lemma}[theorem]{Lemma}
\newtheorem{proposition}[theorem]{Proposition}
\newtheorem{assumption}[theorem]{Assumption}
\theoremstyle{remark}
\newcommand{\as}{a.s.}
\newcommand{\ie}{i.e.}
\newcommand{\eg}{e.g.}
\title{Online EM Algorithm for Latent Data Models}
\author{Olivier Capp\'e \& Eric Moulines}
\date{LTCI, TELECOM ParisTech, CNRS.\\46 rue Barrault, 75013 Paris, France.}
\begin{document}

\maketitle

\begin{abstract}
  In this contribution, we propose a generic online (also sometimes called adaptive or recursive)
  version of the Expectation-Maximisation (EM) algorithm applicable to latent variable models of
  independent observations. Compared to the algorithm of \cite{titterington:1984}, this approach is
  more directly connected to the usual EM algorithm and does not rely on integration with respect
  to the complete data distribution. The resulting algorithm is usually simpler and is shown to
  achieve convergence to the stationary points of the Kullback-Leibler divergence between the
  marginal distribution of the observation and the model distribution at the optimal rate, \ie,
  that of the maximum likelihood estimator. In addition, the proposed approach is also suitable for
  conditional (or regression) models, as illustrated in the case of the mixture of linear
  regressions model.

  \medskip
  \noindent {\sc Keywords:} Latent data models, Expectation-Maximisation, adaptive algorithms, online estimation, stochastic approximation, Polyak-Ruppert averaging, mixture of regressions.
\end{abstract}

\note{oKp}{British spelling in use}

\section{Introduction}
The EM (Expectation-Maximisation) algorithm \citep{dempster:laird:rubin:1977} is a popular tool for
maximum-likelihood (or maximum a posteriori) estimation. The common strand to problems where this
approach is applicable is a notion of {\em incomplete data}, which includes the conventional sense
of missing data but is much broader than that. The EM algorithm demonstrates its strength in
situations where some hypothetical experiments yields \emph{complete} data that are related to the
parameters more conveniently than the measurements are. Problems where the EM algorithm has proven
to be useful include, among many others, mixture of densities
\citep{titterington:smith:makov:1985}, censored data models \citep{tanner:1993}, etc. The EM
algorithm has several appealing properties. Because it relies on complete data computations, it is
generally simple to implement: at each iteration, \emph{(i)} the so-called \emph{E-step} only
involves taking expectation over the conditional distribution of the latent data given the
observations and \emph{(ii)} the \emph{M-step} is analogous to complete data weighted
maximum-likelihood estimation. Moreover, \emph{(iii)} the EM algorithm naturally is an ascent
algorithm, in the sense that it increases the (observed) likelihood at each iteration. Finally
under some mild additional conditions, \emph{(iv)} the EM algorithm may be shown to converge to a
\emph{stationary point} (\ie, a point where the gradient vanishes) of the log-likelihood
\citep{wu:1983}. Note that convergence to the maximum likelihood estimator cannot in general be
guaranteed due to possible presence of multiple stationary points.

When processing large data sets or data streams  however, the EM algorithm becomes impractical
due to the requirement that the whole data be available at each iteration of the
algorithm. For this reason, there has been a strong interest for online variants of the EM which make it possible to estimate the parameters of a latent data model without
storing the data. In this work, we consider online algorithms for latent data models with
independent observations. The dominant approach (see also
Section~\ref{sec:state-of-ze-art} below) to online EM-like estimation follows the method proposed by
\cite{titterington:1984} which consists in using a 
stochastic approximation algorithm, where the parameters are updated after each new observation
using the gradient of the incomplete data likelihood weighted by the complete data Fisher
information matrix. This approach has been used, with some variations, in many different
applications (see, \eg, \citealp{chung:bohme:2005,liu:almhana:choulakian:mcgorman:2006}); a proof
of convergence was given by \cite{wang:zhao:2006}.

In this contribution, we propose a new online EM algorithm that sticks more closely to the
principles of the original (batch-mode) EM algorithm. In particular, each iteration of the proposed
algorithm is decomposed into two steps, where the first one is a stochastic approximation version of
the E-step aimed at incorporating the information brought by the newly available observation, and,
the second step consists in the maximisation program that appears in the M-step of the traditional
EM algorithm. In addition, the proposed algorithm does not rely on the complete data information
matrix, which has two important consequences: firstly, from a practical point of view, the
evaluation and inversion of the information matrix is no longer required, secondly, the convergence
of the procedure does not rely on the implicit assumption that the model is \emph{well-specified},
that is, that the data under consideration is actually generated by the model, for some unknown
value of the parameter. As a consequence, and in contrast to previous work, we provide an analysis
of the proposed algorithm also for the case where the observations are not assumed to follow the
fitted statistical model. This consideration is particularly relevant in the case
of 
conditional missing data models, a simple case of which is used as an illustration of the proposed
online EM algorithm. Finally, it is shown that, with the additional use of Polyak-Ruppert
averaging, the proposed approach converges to the stationary points of the limiting normalised
log-likelihood criterion (\ie, the Kullback-Leibler divergence between the marginal density of the
observations and the model pdf) at a rate which is optimal.

The paper is organised as follows: In Section~\ref{sec:algo}, we review the basics of the EM and
associated algorithms and introduce the proposed approach. The connections with other
existing methods are discussed at the end of Section~\ref{sec:onlineEM_def} and a simple example
of application is described in Section~\ref{sec:exp:pois}. Convergence results are stated in
Section~\ref{sec:conv}, first in term of consistency (Section~\ref{sec:conv:consist}) and then of
convergence rate (Section~\ref{sec:conv:rate}), with the corresponding proofs given in
Appendix~\ref{sec:proofs}. Finally in Section~\ref{sec:regmix}, the performance of this approach is
illustrated in the context of mixture of linear regressions.

\section{Algorithm Derivation}
\label{sec:algo}
\subsection{EM Basics}
\label{sec:intro}
In this section, we review the key properties of the EM algorithm as introduced
by \cite{dempster:laird:rubin:1977}. The latent
variable statistical model postulates the existence of a non-observable or
latent variable $X$ distributed under $\fx(x; \theta)$ where
$\{ \fx(x; \theta), \theta \in \Theta \}$ denotes a parametric
family of probability density functions indexed by a parameter value $\theta
\in \Theta \subset \rset^\nbt$. The observation $Y$ is then viewed as a
deterministic function of $X$ which takes its values in the set $\Yset$. This latent variable mechanism provides a
unified framework for situations which includes missing data, censored
observations, noisily observed data, \dots \citep{dempster:laird:rubin:1977}.

We will denote by $\fy(y;\theta)$ the (observed) likelihood function induced by the latent data
model. In addition, the notations $\PE_\theta[\cdot]$ and $\PE_\theta[\cdot|Y]$ will be used to
denote, respectively, the expectation and conditional expectation under the model parameterised by
$\theta$. Likewise, let $\pi$ denote the probability density function of the observation $Y$, where
we stress again that we do not restrict ourselves to the case where $\pi(\cdot) =
\fy(\cdot;\theta^\star)$, for an unknown value $\theta^\star$ of the parameter. The notations $\PP_\pi$
and $\PE_\pi$ will be used to denote probability and expectation under the actual distribution of
the observation.

Given $n$ independent and identically distributed observations $Y_{1:n} \eqdef (Y_1, \dots, Y_n)$, the maximum
likelihood estimator is defined as $\hat{\theta}_n \eqdef
\mathrm{argmax}_{\theta\in\Theta} \, n^{-1}\logl{Y_{1:n}}{\theta}$, where
\begin{equation}
  \label{eq:LikelihoodFunction}
  \logl{Y_{1:n}}{\theta} \eqdef \sum_{i=1}^n \logl{Y_i}{\theta} \eqsp.
\end{equation}
Note that we normalise the log-likelihood (by $n$) to ease the transition to the online setting where $n$ increases when new observations become available.

The EM algorithm is an iterative optimisation algorithm that maximises the above (normalised) log-likelihood function despite the possibly complicated form of $\fy$ resulting from the latent data model. Traditionally, each EM iteration is decomposed in two steps. The E-step
consists in evaluating the conditional expectation
\begin{equation}
  \label{eq:EM-auxiliary-function}
  \funcEM{Y_{1:n}}{\theta}{\htheta_k} \eqdef n^{-1} \sum_{i=1}^n \PE_{\htheta_k}\left[
  \lx( X_i; \theta) \big| Y_i \right]
\end{equation}
where $\htheta_k$ is the current estimate of $\theta$, after $k$ iterations of
the algorithm. In the M-step, the value of $\theta$ maximising
$\funcEM{Y_{1:n}}{\theta}{\htheta_k}$ is found. This yields the new parameter
estimate $\htheta_{k+1}$. This two step process is repeated until convergence. 
The essence of the EM algorithm is that increasing
$\funcEM{Y_{1:n}}{\theta}{\htheta_k}$ forces an increase of the log-likelihood
$\logl{Y_{1:n}}{\theta}$ \citep{dempster:laird:rubin:1977}.

For $\mu: \Theta \to \rset$ a differentiable function, denote by $\nabla_\theta \mu= (\partial \mu/
\partial \theta_1, \dots, \partial \mu/ \partial \theta_{d_\theta})^T$ the \emph{gradient} of
$\mu$. If $\mu$ is twice differentiable, we denote by $\nabla_\theta^2 \mu$ the \emph{Hessian
  matrix} which is a $d_\theta \times d_\theta$ matrix whose components are given by
$[\nabla^2_\theta \mu]_{i,j}= \frac{\partial^2 \mu}{\partial \theta_i \partial \theta_j}$, $1 \leq
i,j \leq d_\theta$.  Following \cite{lange:1995}, if the M-step of the EM algorithm is replaced by
a Newton update, one obtains, assuming some regularity, the following recursion
\begin{equation}
  \label{eq:gradientEM-algorithm-lange}
  \htheta_{k+1} = \htheta_k + \gamma_{k+1} \left[\FIMcond{Y_{1:n}}{\hat{\theta}_k}\right]^{-1} \sum_{i=1}^n \PE_{\hat{\theta}_k} \left[ \nabla_{\theta} \lx( X_i; \hat{\theta}_k) \big| Y_i \right] \eqsp ,
\end{equation}
where $\gamma_{k+1}$ is a step size ($\gamma_{k+1} = 1$ correspond to the actual Newton update) and
$\FIMcond{Y_{1:n}}{\theta} = n^{-1} \sum_{i=1}^n \FIMcond{Y_i}{\theta}$ with $\FIMcond{y}{\theta} =
- \PE_\theta \left[ \nabla_\theta^2 \lx( X; \theta) \big| Y=y \right]$.
Note that due to  the so-called Fisher's identity (see discussion of \citealp{dempster:laird:rubin:1977}), the gradient term indeed coincides with the (observed data) score function as
\begin{equation}
  \PE_\theta \left[ \nabla_\theta \lx( X;
  \theta) \big| Y \right] = \dly(Y;\theta) \eqsp .
  \label{eq:Fisher}
\end{equation}
The algorithm in~\eqref{eq:gradientEM-algorithm-lange} can be shown to be locally equivalent to the
EM algorithm at convergence \citep{lange:1995}. In practise, the step-size
$\gamma_{k+1}$ is often adjusted using line searches to ensure that the likelihood is indeed increased at each iteration. In addition, $\FIMcond{Y_{1:n}}{\theta}$ is
not necessarily a positive definite matrix or could be badly conditioned; therefore, some
adjustment of the weight matrix $\FIMcond{Y_{1:n}}{\theta}$ may be necessary to avoid numerical
problems.

A well-known modification of the Newton recursion consists in replacing $\FIMcond{Y_{1:n}}{\theta}$
in~\eqref{eq:gradientEM-algorithm-lange} by the Fisher Information Matrix (FIM) associated to a
\emph{complete} observation,
\begin{equation}
  \label{eq:completeobservationFIM}
  \FIMcomplete{}{\theta} \eqdef - \PE_\theta \left[ \nabla_\theta^2 \log \fx(X;\theta) \right] \eqsp .
\end{equation}
Under the mild assumption that the complete data model is regular, $\FIMcomplete{}{\theta}$ is
guaranteed to be positive definite.  This modified recursion, which is more robust, may be seen as
an approximation of the \emph{scoring method} \citep{mclachlan:krishnan:1997}, where the complete
data FIM is used in place of the actual (\emph{observed}) FIM
\begin{equation}
  \label{eq:FIM}
   \FIMincomplete{\theta}\eqdef - \PE_\theta \left[ \nabla_\theta^2 \log \fy(Y;\theta) \right] \eqsp ,
\end{equation}
despite the fact that, in general, $\FIMincomplete{\theta}$ and $\FIMcomplete{}{\theta}$ are different. $\FIMcomplete{}{\theta}$ usually also differs from
$\FIMcond{Y_{1:n}}{\theta}$, as $\FIMcond{Y_{1:n}}{\theta}$ converges, as $n$ tends to infinity, to
\begin{equation}
  \label{eq:definitionFIMcomplete}
  \FIMcomplete{\pi}{\theta} \eqdef - \PE_\pi\left[ \PE_\theta \left[ \nabla_\theta^2 \log \fx(X;\theta) \big| Y \right] \right] \eqsp ,
\end{equation}
which doesn't correspond to a Fisher information matrix in the complete data model, except when
$\pi$ coincides with $f(\cdot;\theta)$. In the particular case, however, where the complete data
model belongs to a canonical (or naturally parameterised) exponential family of distributions,
$\FIMcond{Y_{1:n}}{\theta}$ coincides with $\FIMcomplete{}{\theta}$ and thus does not depend on
$\pi$ anymore. Hence, except in some specific cases or if one assumes that the model is
well-specified (\ie, $\pi = \fy(\cdot;\theta^\star)$), the convergence behaviour of the recursion
in~\eqref{eq:gradientEM-algorithm-lange} will be different when $\FIMcomplete{}{\theta}$ is used
instead of $\FIMcond{Y_{1:n}}{\theta}$.

\subsection{Stochastic Gradient EM Algorithms}
\label{sec:state-of-ze-art}
Being able to perform online estimation means that the data must be run through only once, which
is obviously not possible with the vanilla EM algorithm. To overcome this difficulty we consider in
the sequel online algorithms which produce, at a fixed computational cost, an updated parameter
estimate $\htheta_{n}$ for each new observation $Y_n$. Note that in the online setting, the
iteration index (which was previously denoted by $k$) is identical to the observation index $n$
and we will use the latter when describing the algorithms. To our best knowledge, the first online
parameter estimation procedure for latent data models is due to \cite{titterington:1984} who
proposed to use a stochastic approximation version of the modified gradient recursion:
\begin{equation}
  \label{eq:recursiveEM-Titterington}
  \htheta_{n+1} = \htheta_n + \gamma_{n+1} \FIMcomplete[-1]{}{\htheta_n} \dly(Y_{n+1}; \htheta_n) \eqsp,
\end{equation}
where $\{ \gamma_n \}$ is a decreasing sequence of positive step sizes. One may also consider using a  stochastic approximation version of the original (Newton) recursion in~(\ref{eq:gradientEM-algorithm-lange}):
\begin{equation}
  \label{eq:recursiveEM-Lange}
  \htheta_{n+1} = \htheta_n + \gamma_{n+1} \FIMcomplete[-1]{\pi}{\htheta_n}  \dly(Y_{n+1}; \htheta_n) \eqsp .
\end{equation}
Note that \eqref{eq:recursiveEM-Lange} does not correspond to a practical algorithm as
$\FIMcomplete{\pi}{\htheta_n}$ is usually unknown, although it can be estimated, for instance, by
recursively averaging over the values of $\FIMcond{Y_n}{\htheta_n}$. As discussed above however,
this algorithm may be less robust than~\eqref{eq:recursiveEM-Titterington} because
$\FIMcond{Y_n}{\htheta_n}$ is (usually) not guaranteed to be positive definite. In the
following, we will refer to~\eqref{eq:recursiveEM-Titterington} as \emph{Titterington's online
  algorithm} and to~\eqref{eq:recursiveEM-Lange} as the \emph{online gradient algorithm} (in
reference to the title of the paper by \citealp{lange:1995}). Note that 
both of these algorithms are based on the stochastic gradient approach and bear
very little resemblance with the original EM algorithm.

\subsection{The Proposed Online EM Algorithm}
\label{sec:onlineEM_def}
We now consider an online approach which is more directly related to the
principle underpinning the EM algorithm. The basic idea is to replace the
expectation step by a stochastic approximation step, while keeping the
maximisation step unchanged. More precisely, at iteration $n$, consider the
function
\begin{equation}
  \label{eq:recursiveEM}
  \hat{Q}_{n+1}( \theta)  = \hat{Q}_{n}( \theta)
  + \gamma_{n+1} \left(\PE_{\htheta_n} \left[ \log \fx(X_{n+1}; \theta) \big| Y_{n+1} \right] - \hat{Q}_{n}(\theta) \right) \eqsp ,
\end{equation}
and set $\htheta_{n+1}$ as the maximum of the function $\theta \mapsto \hat{Q}_{n+1}(\theta)$ over
the feasible set $\Theta$. One important advantage of \eqref{eq:recursiveEM} compared to
\eqref{eq:recursiveEM-Titterington} is that it automatically satisfies the parameter constraints without requiring any further modification. 
In addition, \eqref{eq:recursiveEM} does not explicitly require the inversion of a $(\nbt \times \nbt)$ matrix. For further comparisons between both approaches, both practical and in terms of rate of convergence, we refer to the example of Section~\ref{sec:exp:pois} and to the analysis of Section~\ref{sec:conv}.

Of course, this algorithm is of practical interest only if it is possible to compute and maximise
$\hat{Q}_n(\theta)$ efficiently. In the following we focus on the case where the complete data
likelihood belongs to an \emph{exponential family} satisfying the following assumptions. Let $\pscal{\cdot}{\cdot}$ denotes the scalar product between two vectors of $\rset^d$ and $|\cdot|$ the associated norm.

\begin{assumption}
\label{assum:expon}
\begin{description}
\item[(a)] The complete data likelihood is of the form
\begin{equation} \label{eq:curved-ex} \fx(x; \theta) = h(x) \exp\left\{-\psi(\theta) +
    \pscal{\suffstat(x)}{\phi(\theta)}\right\} \eqsp .
  \end{equation}
\item[(b)] The function
\begin{equation} \label{eq:defcondexpsuffstat} \condexpsuffstat{y}{\theta}
  \eqdef \PE_\theta \left[ \suffstat(X) \big| Y=y \right] \eqsp ,
\end{equation}
is well defined for all $(y,\theta) \in \Yset \times \Theta$.
\item[(c)] There exists a convex open subset $\calS \subset \mathbb{R}^d$, which is such that
\begin{itemize}
  \item for all $s \in \calS$,  $(y,\theta) \in \Yset \times \Theta$ and $\gamma \in [0,1)$, $(1-\gamma) s + \gamma \condexpsuffstat{y}{\theta} \in \calS$.
  \item for any $s \in \calS$, the function $\theta \mapsto \ls(s;\theta) \eqdef -\psi(\theta) +
  \pscal{s}{\phi(\theta)}$ has a unique global maximum over $\Theta$ denoted $\functheta{s}$,\ie\
  \begin{equation} \label{eq:definition-functheta}
  \functheta{s} \eqdef \mathrm{argmax}_{\theta \in \Theta} \ls(s;\theta) \eqsp.
  \end{equation}
\end{itemize}

\end{description}
\end{assumption}

Assumption \ref{assum:expon} implies that the evaluation of $\PE_\theta \left[ \log \fx(X; \theta)
  \big| Y\right]$, and hence the E-step of the EM algorithm, reduces to the computation of the
expected value $\PE_\theta \left[ \suffstat(X) \big| Y \right]$ of the \emph{complete data
  sufficient statistic} $S(X)$. Indeed, the EM reestimation functional
$\funcEM{Y_{1:n}}{\theta}{\theta'}$ is then defined by
\begin{equation*}
    \funcEM{Y_{1:n}}{\theta}{\theta'} = \ls \left(
    n^{-1} \sum_{i=1}^n \condexpsuffstat{Y_i}{\theta'}; \theta \right) \eqsp .
\end{equation*}
The $(k+1)$-th iteration of the (batch mode) EM algorithm may thus be expressed as
\begin{equation} \label{eq:EM-recursion} \htheta_{k+1}= \functheta{ n^{-1}
    \sum_{i=1}^n \condexpsuffstat{Y_i}{\htheta_k}}\eqsp ,
\end{equation}
where the M-step corresponds to the application of the function $\bar\theta$. Note that the
construction of the set $\calS$ in Assumption \ref{assum:expon}(c) reflects the fact that in most
applications of EM, the M-step is unambiguous only when a sufficient number of observations have
been gathered. This point will be illustrated in the example to be considered in
Section~\ref{sec:regmix} below. Assumption \ref{assum:expon}(c) takes care of this issue in the case
of the online EM algorithm. As an additional comment about Assumption \ref{assum:expon}, note that
we do not require that $\phi$ be a one to one mapping and hence the complete data model may also
correspond to a \emph{curved} exponential family, where typically $\theta$ is of much lower
dimension than $\psi(\theta)$ (see, for instance, \cite{chung:bohme:2005,cappe:charbit:moulines:2006} for an
example involving Gaussian densities with structured covariance matrices).

In this setting, the proposed online EM algorithm takes the following form
\begin{align}
  \hcondexpsuffstat_{n+1} & = \hcondexpsuffstat_n + \gamma_{n+1}(\condexpsuffstat{Y_{n+1}}{\htheta_n} -\hcondexpsuffstat_n) \eqsp , \nonumber \\
   \htheta_{n+1} & = \functheta{\hcondexpsuffstat_{n+1}} \eqsp .
  \label{eq:recursiveEM-curvedexponentialfamily}
\end{align}

Algorithm of that kind have a rather long history in the machine learning
community.  The idea of sequentially updating the vector of sufficient
statistics has apparently been first proposed by \cite{nowlan:1991}, using a
fixed step size (or learning rate) $\gamma_n = \gamma$ (see also
\citealp{jordan:jacobs:1994}). The online EM algorithm
\eqref{eq:recursiveEM-curvedexponentialfamily} is also closely related to the
``incremental'' version of the EM algorithm derived by \cite{neal:hinton:1999}. The
incremental setting is more general than the recursive setting considered here,
because the observations are not necessarily processed sequentially in time and
might be used several times. The incremental EM algorithm of \cite{neal:hinton:1999}
defines the $(k+1)$-th parameter estimate as
\begin{equation} \label{eq:IncrmentalEM-recursion}
    \htheta_{k+1}= \functheta{ \left[\min(k+1,n)\right]^{-1}
    \sum_{i=1}^{\min(k+1,n)}  \hcondexpsuffstat_{k+1,i} } \eqsp,
\end{equation}
where $\hcondexpsuffstat_{k+1,i} = \hcondexpsuffstat_{k,i}$ if $i \neq I_{k+1}$ and
$\hcondexpsuffstat_{k+1,I_{k+1}} = \condexpsuffstat{Y_{I_{k+1}}}{\htheta_k}$. The index $I_{k+1}$
is typically chosen as $k+1$ while $k \leq n-1$ and runs through the data set, that is, $I_k \in \{1, \dots, n\}$, in a fixed or pseudo
random scanning order for subsequent iterations. When used in
batch mode (that is when $k > n$) it is seen that it mostly differs from the traditional EM
strategy in~\eqref{eq:EM-recursion} by the fact that the parameters are updated after each
computation of the conditional expectation of the complete data sufficient statistic corresponding to
one observation. When used in online mode ($k \leq n$), the algorithm of \cite{neal:hinton:1999}
coincides with the proposed online EM algorithm with a step-size of $\gamma_k = 1/k$ (see Section~\ref{sec:conv}  for further discussion of this particular choice of step sizes).
A specific
instance of the proposed online EM algorithm has been derived by \cite{sato:ishii:2000} for maximum
likelihood estimation in the so-called normalised Gaussian network; this algorithm was later extended by
\cite{sato:2000} to a canonical exponential family ($\phi(\theta)= \theta$ in \eqref{eq:curved-ex})
and a sketch of the proof of convergence, based on stochastic approximation results, was given. The
online EM algorithm defined in \eqref{eq:recursiveEM-curvedexponentialfamily} may be seen as a
generalisation of this scheme.

\subsection{An Example: Poisson Mixture}
\label{sec:exp:pois}
Before analysing the convergence of the above algorithm, we first consider a
simple example of application borrowed from \cite{liu:almhana:choulakian:mcgorman:2006}: consider the case of a mixture of $m$ Poisson distributions
\begin{equation}
  \label{eq:PoissonMixture:IncompleteLikelihood}
  \fy(y; \theta)= \sum_{j=1}^m \mixtureweight{j} \frac{\mixtureparam{j}^y}{y !} \rme^{-\mixtureparam{j}} \eqsp, \quad \text{for $y=0,1,2, \dots$} \eqsp,
\end{equation}
where the unknown parameters $\theta= (\mixtureweight{1}, \dots, \mixtureweight{m},
\mixtureparam{1}, \dots, \mixtureparam{m})$ satisfies the constraints $\mixtureweight{j} > 0$,
$\sum_{i=1}^m \mixtureweight{j} = 1$ and $\mixtureparam{j} > 0$.  In the mixture problem, the
incompleteness is caused by the ignorance of the component of the mixture.  Let $W$ be a random
variable taking value in $\{1, \dots, m\}$ with probabilities $\{ \mixtureweight{1}, \dots,
\mixtureweight{m} \}$. The random variable $W$ is called the regime or state and is not observable.
The probability density defined in \eqref{eq:PoissonMixture:IncompleteLikelihood} corresponds to
the assumption that $Y$ is distributed, given that $W=j$, according to a Poisson random variable
with parameter $\mixtureparam{j}$. Note that in this case, as in all examples which involve the
simpler missing data mechanism rather than the general latent data model introduced in
Section~\ref{sec:intro}, the complete data $X$ simply consists of the couple $(Y,W)$ and hence
conditional expectations of $X$ given $Y$ really boils down to expectations of $W$ given $Y$.

For the Poisson mixture, the complete data log-likelihood is given by
\begin{equation}
  \label{eq:PoissonMixture:CompleteLikelihood}
  \log \fx(y,w;\theta)
  = - \log(y!) + \sum_{j=1}^m \left[ \log (\mixtureweight{j}) -
    \mixtureparam{j} \right] \delta_{w,j} + \sum_{j=1}^m
  \log(\mixtureparam{j}) y \delta_{w,j} \eqsp,
\end{equation}
where $\delta_{i,l}$ is the Kronecker delta symbol: $\delta_{i,l} = 1$ if $i=l$ and $\delta_{i,l}=
0$ otherwise. The complete data likelihood may be rewritten as in \eqref{eq:curved-ex} with
$h(y,w)= - \log( y!)$, $\suffstat(y,w) = (\suffstat_1(y,w), \dots, \suffstat_m(y,w))$ and
$\phi(\theta) = (\phi_1(\theta), \dots, \phi_m(\theta))$, where
\begin{equation*}
  \suffstat_j(y,w) \eqdef \begin{pmatrix} \delta_{w,j} \\  y \delta_{w,j} \end{pmatrix} \eqsp, \qquad \text{and} \quad
  \phi_j(\theta) \eqdef  \begin{pmatrix}
    \log(\mixtureweight{j}) - \mixtureparam{j} \\ \log(\mixtureparam{j})
  \end{pmatrix} \eqsp .
\end{equation*}
In this case, the conditional expectation of the
complete data sufficient statistics is fully determined by the
posterior probabilities of the mixture components defined by
\begin{equation*}
  \bmixtureind{j}(y;\theta) \eqdef \PP_\theta[W=j|Y=y] = \frac{\mixtureweight{j} \mixtureparam{j}^y \rme^{-\mixtureparam{j}}}{\sum_{l=1}^m \mixtureweight{l} \mixtureparam{l}^y \rme^{-\mixtureparam{l}}} \eqsp , \quad \text{for $j=1, \dots, m$} \eqsp .
\end{equation*}

The $(n+1)$-th step of the online EM algorithm consists in computing, for $j=1, \dots,
m$,
\begin{align}
  \hcondexpsuffstat_{j,n+1} & = \hcondexpsuffstat_{j,n} + \gamma_{n+1}
  \left\{ \begin{pmatrix} \bmixtureind{j}(Y_{n+1};\htheta_n) \\
      \bmixtureind{j}(Y_{n+1};\htheta_n) Y_{n+1}
    \end{pmatrix} - \hcondexpsuffstat_{j,n} \right\} \eqsp , \nonumber \\
  \hmixtureweight{j,n+1} & = \hcondexpsuffstat_{j,n+1}(1) \eqsp ,
  \qquad \hmixtureparam{j,n+1} =
  \frac{\hcondexpsuffstat_{j,n+1}(2)}{\hcondexpsuffstat_{j,n+1}(1)} \eqsp .
  \label{eq:REM-mixture-Poisson}
\end{align}
Comparing with the generic update equations in (\ref{eq:recursiveEM-curvedexponentialfamily}), one recognises the stochastic approximation version of the E-step, in the first line of~\eqref{eq:REM-mixture-Poisson}, followed by the application of $\bar\theta$.

To compare with Titterington's online algorithm in
\eqref{eq:recursiveEM-Titterington}, one first need to evaluate the complete
Fisher information matrix $\FIMcomplete{}{\theta}$. To deal with the equality
constraint $\sum_{j=1}^m \mixtureweight{j} = 1$, only the first $m-1$ weights
are used as parameters and the remaining one is represented as
$\mixtureweight{m} = 1 - \sum_{j=1}^{m-1} \mixtureweight{j}$ as in
\cite{titterington:1984}. The complete data Fisher information matrix defined
in~\eqref{eq:completeobservationFIM} is then given by
\[
\FIMcomplete{}{\theta} = \begin{pmatrix}
  \operatorname{diag}(\mixtureweight{1}^{-1}, \dots, \mixtureweight{m-1}^{-1}) + \mixtureweight{m}^{-1}\mathbf{1}_{m-1}\mathbf{1}_{m-1}^T  & \mathbf{0}_{(m-1) \times m} \\
  \mathbf{0}_{m \times (m-1)} &
  \operatorname{diag}(\mixtureweight{1}/\mixtureparam{1}, \dots,
  \mixtureweight{m}/\mixtureparam{m})
\end{pmatrix} \eqsp ,
\]
where the superscript $T$ denotes transposition, $\mathbf{1}$ and $\mathbf{0}$
respectively denote a vector of ones and a matrix of zeros, whose dimensions
are specified as subscript. Upon inverting $\FIMcomplete{}{\theta}$, the
following expression for the $(n+1)$-th step of Titterington's online
algorithm is obtained~:
\begin{align}
  \hmixtureweight{j,n+1} &= \hmixtureweight{j,n} + \gamma_{n+1} \left(
    \bmixtureind{j}(Y_{n+1},\htheta_n) - \hmixtureweight{j,n}
  \right) \eqsp, \nonumber \\
  \hmixtureparam{j,n+1} & = \hmixtureparam{j,n} + \gamma_{n+1}
  \frac{\bmixtureind{j}(Y_{n+1}; \htheta_n)}{\hmixtureweight{j,n}} \left(
    Y_{n+1} - \hmixtureparam{j,n} \right) \eqsp .
  \label{eq:REM-mixture-Poisson-Titterington}
\end{align}
To make the connection more explicit with the update of the online EM algorithm, note that due to
the fact that, in this simple case, there is an identification between some components of the vector
of sufficient statistics and the weight parameters (\ie, $\functheta{s_{j,n}} = \mixtureweight{j}$),
it is possible to rewrite~\eqref{eq:REM-mixture-Poisson} in terms of the latter only:
\begin{align*}
  \nonumber \hmixtureweight{j,n+1} & = \hmixtureweight{j,n} + \gamma_{n+1}
  \left( \bmixtureind{j}(Y_{n+1};\htheta_n) - \hmixtureweight{j,n}
  \right) \eqsp , \\
  \nonumber \hmixtureparam{j,n+1} & = \frac{
    \hmixtureparam{j,n}\hmixtureweight{j,n} + \gamma_{n+1} \left(
      \bmixtureind{j}(Y_{n+1};\htheta_n)Y_{n+1} -
      \hmixtureparam{j,n}\hmixtureweight{j,n}\right)}{\hmixtureweight{j,n+1}}
  \eqsp .
\end{align*}
In the Poisson mixture example, the two algorithms differ only in the way the intensities of the
Poisson components are updated. Whereas the online EM algorithm in~\eqref{eq:REM-mixture-Poisson}
does ensure that all parameter constraints are satisfied, it may happen, in contrast,
that~\eqref{eq:REM-mixture-Poisson-Titterington} produces negatives values for the intensities.
Near convergence however, the two algorithms behave very similarly in this simple case (see
Proposition~\ref{prop:asymptoticequivalencerecursiveEMs} below).

\subsection{Extensions}
\label{sec:exts}
As previously mentioned, \cite{neal:hinton:1999} advocate the use of online algorithms also in the
case of batch training with large sample sizes. The online algorithm then operates by repeatedly
scanning through the available sample. In our setting, this use of the proposed online EM algorithm may be
analysed by letting $\pi$ denote the empirical measure associated with the fixed sample $X_1,
\dots, X_n$. The results to follow thus also apply in this context, at least when the data scanning
order is random.

In semi-parametric regression models, each observation $Y$ comes with a vector of covariates $Z$
whose distribution is usually unspecified and treated as a nuisance parameter. To handle latent
data versions of regression models (mixture of regressions, mixture of experts, etc.---see
\citealp{gruen:leisch:2007,jordan:jacobs:1994}, as well as the example of Section~\ref{sec:regmix})
in our framework, one only needs to assume that the model consists of a parametric family
$\{f(x|z;\theta), \theta\in\Theta)\}$ of \emph{conditional} pdfs. In this setting however, it is
not possible anymore to compute expectations under the complete data distribution and the model can
never be well-specified, as the distribution of $Z$ is left unspecified. Thus Titterington's
algorithm in (\ref{eq:recursiveEM-Titterington}) does not directly apply in this setting. In
contrast, the proposed algorithm straightforwardly extends to this case by considering
covariate-dependent expectations of the sufficient statistics of $f(x|z;\theta)$, of the form
$\condexpsuffstat{y,z}{\theta} = \PE_\theta[\suffstat(X)|Y=y,Z=z]$, instead of
(\ref{eq:defcondexpsuffstat}). For notational simplicity, we state our results in the following
section without assuming the presence of covariates but extension to the case where there are
covariates is straightforward; the example of Section~\ref{sec:regmix} corresponds to a case where
covariates are available.

\section{Convergence Issues}
\label{sec:conv}
\subsection{Consistency}
\label{sec:conv:consist}
In this section, we establish the convergence of the proposed algorithm towards the set of
stationary points of the Kullback-Leibler divergence between the actual observation density and the
model likelihood. These results are the analogues of those given by \cite{wang:zhao:2006} for
Titterington's online algorithm, with a somewhat broader scope since we do not assume that the
model is well-specified. The proofs corresponding to this section are given in Appendix
\ref{sec:proofs}. In addition to the conditions listed in Assumption~\ref{assum:expon}, we will require the following additional regularity assumptions.

\begin{assumption}
\label{assum:reg}
\begin{description}
\item[(a)] The parameter space $\Theta$ is a convex open subset of $\rset^\nbt$ and $\psi$ and $\phi$ in~(\ref{eq:curved-ex}) are twice continuously differentiable on $\Theta$.
\item[(b)] The function $s \mapsto \functheta{s}$, defined in \eqref{eq:definition-functheta}, is continuously differentiable on $\calS$,
\item[(c)] For some $p > 2$, and all compact subsets $\calK \subset \calS$,
  \[
    \sup_{s\in\calK} \PE_\pi\left( \left| \condexpsuffstat{Y}{\functheta{s}} \right|^p \right) < \infty \eqsp .
  \]
\end{description}
\end{assumption}

To analyse the recursion \eqref{eq:recursiveEM-curvedexponentialfamily}, the first step consists in expressing it as a standard Robbins-Monro stochastic approximation procedure operating on the complete data sufficient statistics:
\begin{equation}
  \label{eq:recursiveEM-curvedexponential-RM}
  \hcondexpsuffstat_{n+1} = \hcondexpsuffstat_n + \gamma_{n+1} \left(\meanfield(\hcondexpsuffstat_n) + \xi_{n+1}\right) \eqsp ,
\end{equation}
where $\meanfield: \calS \to \rset^{\nbt}$ is the so-called \emph{mean field} given by
\begin{equation}
  \label{eq:recursiveEM-curvedexponential-meanfield}
  \meanfield(s) \eqdef \PE_\pi \left[ \condexpsuffstat{Y}{\functheta{s}} \right] - s  \eqsp,
\end{equation}
and $\{ \xi_n \}_{n \geq 1}$ is a sequence of random variables representing stochastic perturbations defined by
\begin{align}
  \label{eq:recursiveEM-curvedexponentialfamily-noise}
  \xi_{n+1} 
  &\eqdef \condexpsuffstat{Y_{n+1}}{\functheta{\hcondexpsuffstat_{n}}} -
  \CPE{\condexpsuffstat{Y_{n+1}}{\functheta{\hcondexpsuffstat_{n}}}}{\mathcal{F}_n} \eqsp ,
\end{align}
where $\mathcal{F}_n$ is the $\sigma$-field generated by $ \left(\hcondexpsuffstat_0, \{ Y_i \}_{i=1}^n \right)$.
The aim of the Robbins-Monro procedure
\eqref{eq:recursiveEM-curvedexponential-RM} is to solve the equation
$\meanfield(s)= 0$.
As a preliminary step, we first characterise the set of roots of the mean field $\meanfield$. The
following proposition shows that, if $s^\star$ belongs to
\begin{equation}
  \label{eq:definition-gamma}
  \Gamma \eqdef \left\{s \in \calS: \meanfield(s)= 0  \right\} \eqsp,
\end{equation}
then $\theta^\star = \functheta{s^\star}$ is a stationary point of the
Kullback-Leibler divergence between $\pi$ and $\fy_\theta$,
\begin{equation}
  \label{eq:KullbackLeiblerDivergence}
  \kullback{\pi}{\fy_\theta} \eqdef 
\PE_\pi \left[ \log \left( \frac{\pi(Y)}{\fy(Y;\theta)} \right)  \right] \eqsp.
\end{equation}
\begin{proposition}
  \label{prop:characterization-stationary-points}
  Under Assumptions~\ref{assum:expon}--\ref{assum:reg},
  if $s^\star \in \calS$ is a root of $\meanfield$, \ie,
  $\meanfield(s^\star)=0$, then $\theta^\star = \functheta{s^\star}$
  is a stationary point of the function $\theta \mapsto \kullback{\pi}{\fy_\theta}$, \ie, $\nabla_\theta \left.
    \kullback{\pi}{\fy_\theta} \right|_{\theta= \theta^\star}= 0$.
  Conversely, if $\theta^\star$ is a stationary point of $\theta \mapsto
  \kullback{\pi}{\fy_\theta}$, then $s^\star = \PE_\pi [\condexpsuffstat{Y}{\theta^\star}]$
  is a root of $\meanfield$.
\end{proposition}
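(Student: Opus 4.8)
The plan is to reduce both implications to a single gradient identity linking the observed-data score to the conditional expectation of the complete-data sufficient statistic. Starting from the exponential-family form~\eqref{eq:curved-ex}, the complete-data score is $\nabla_\theta \lx(x;\theta) = \nabla_\theta \ls(\suffstat(x);\theta)$, which is \emph{affine} in $\suffstat(x)$. Combining Fisher's identity~\eqref{eq:Fisher} with this linearity, and pushing the conditional expectation through the affine map, I would obtain the key identity
\begin{equation*}
\dly(y;\theta) = \CPE{\nabla_\theta\lx(X;\theta)}{Y=y} = \nabla_\theta \ls\!\left(\condexpsuffstat{y}{\theta};\theta\right) \eqsp,
\end{equation*}
valid for every $(y,\theta)$. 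Averaging under $\pi$ and differentiating~\eqref{eq:KullbackLeiblerDivergence} (so that $\nabla_\theta \kullback{\pi}{\fy_\theta} = -\PE_\pi[\dly(Y;\theta)]$, using Assumption~\ref{assum:reg} to interchange $\nabla_\theta$ and $\PE_\pi$), and once more moving $\PE_\pi$ inside by affineness in $s$, I would arrive at
\begin{equation*}
\nabla_\theta \kullback{\pi}{\fy_\theta} = -\,\nabla_\theta \ls\!\left(\PE_\pi\!\left[\condexpsuffstat{Y}{\theta}\right];\theta\right) \eqsp.
\end{equation*}
The second ingredient is the first-order stationarity condition for the M-step map: since $\Theta$ is open and $\functheta{s}$ maximises $\theta\mapsto\ls(s;\theta)$ (Assumption~\ref{assum:expon}(c)), one has $\nabla_\theta\ls(s;\theta)|_{\theta=\functheta{s}} = 0$ for all $s\in\calS$.

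For the forward implication, let $s^\star\in\Gamma$ and set $\theta^\star=\functheta{s^\star}$. The root condition $\meanfield(s^\star)=0$ reads $\PE_\pi[\condexpsuffstat{Y}{\theta^\star}] = s^\star$. Substituting this into the divergence-gradient formula at $\theta^\star$ turns it into $-\nabla_\theta\ls(s^\star;\theta)|_{\theta=\theta^\star}$, which vanishes by the stationarity condition for $\functheta{s^\star}$; hence $\theta^\star$ is a stationary point of the divergence.

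For the converse, suppose $\nabla_\theta\kullback{\pi}{\fy_\theta}|_{\theta^\star}=0$ and set $s^\star=\PE_\pi[\condexpsuffstat{Y}{\theta^\star}]$. The gradient identity immediately gives $\nabla_\theta\ls(s^\star;\theta)|_{\theta=\theta^\star}=0$, that is, $\theta^\star$ is a critical point of $\theta\mapsto\ls(s^\star;\theta)$. To close the argument I need to identify this critical point with the maximiser, $\functheta{s^\star}=\theta^\star$; once that holds, $\meanfield(s^\star)=\PE_\pi[\condexpsuffstat{Y}{\functheta{s^\star}}]-s^\star = \PE_\pi[\condexpsuffstat{Y}{\theta^\star}]-s^\star = 0$, as required.

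I expect this identification to be the main obstacle: Assumption~\ref{assum:expon}(c) only guarantees a \emph{unique global maximiser} of $\ls(s^\star;\cdot)$, not a unique stationary point, so a priori $\theta^\star$ could be a non-maximising critical point. The clean case is the canonical exponential family, where $\ls(s;\cdot)$ is concave and stationarity already forces $\theta^\star=\functheta{s^\star}$; in the general curved case I would lean on the regularity of Assumption~\ref{assum:reg}(b) (smoothness of $\functheta{\cdot}$, typically obtained from the implicit function theorem applied to $\nabla_\theta\ls(s;\theta)=0$) to argue that the stationarity equation determines $\theta$ as $\functheta{s^\star}$, and hence to match the two. Apart from this, the only other care needed is the justification of differentiation under the $\PE_\pi$ integral, for which the moment bound in Assumption~\ref{assum:reg}(c) is tailored.
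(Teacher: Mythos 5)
Your overall route is exactly the paper's: Fisher's identity specialised to the exponential family form~\eqref{eq:curved-ex} gives $\dly(y;\theta) = -\nabla_\theta\psi(\theta) + \nabla_\theta\phi^T(\theta)\,\condexpsuffstat{y}{\theta}$, averaging under $\pi$ yields the gradient of $\theta\mapsto\kullback{\pi}{\fy_\theta}$ in terms of $\PE_\pi[\condexpsuffstat{Y}{\theta}]$, and the forward implication then follows by combining the root condition $s^\star=\PE_\pi[\condexpsuffstat{Y}{\theta^\star}]$ with the first-order condition $\nabla_\theta\ls(s^\star;\theta)|_{\theta=\functheta{s^\star}}=0$. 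This part of your argument is complete and coincides with the paper's relations \eqref{eq:keyrelation1}--\eqref{eq:keyrelation2}.

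The converse is where you stop short, and the obstacle you flag is resolved in the paper not by any of the workarounds you sketch, but by reading Assumption~\ref{assum:expon}(c) in a stronger sense: the paper's proof asserts that under \ref{assum:expon}(c) the function $\theta \mapsto -\psi(\theta)+\pscal{\phi(\theta)}{s^\star}$ has a \emph{unique stationary point}, which is the maximum $\functheta{s^\star}$, and concludes $\theta^\star=\functheta{s^\star}$ immediately. You are right that the literal wording of \ref{assum:expon}(c) only guarantees a unique global maximiser; the paper systematically uses the stronger well-posedness reading (it does so again in the proof of Proposition~\ref{prop:lyapunov-function-h}, where \ref{assum:expon}(c) is invoked to obtain negative definiteness of $\nabla^2_\theta\ls(s;\theta)|_{\theta=\functheta{s}}$ and uniqueness of the solution of the score equation~\eqref{eq:keyrelation3}). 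What does not work is your proposed repair via Assumption~\ref{assum:reg}(b): continuous differentiability of $s\mapsto\functheta{s}$ carries no information about critical points of $\ls(s^\star;\cdot)$ other than the maximiser, and the implicit function theorem yields only \emph{local} uniqueness of roots of $\nabla_\theta\ls(s;\theta)=0$ near a given solution, so it cannot exclude a non-maximising stationary point elsewhere in $\Theta$. To close your proof you must either adopt the unique-stationary-point reading of \ref{assum:expon}(c), as the paper does, or restrict to the canonical (concave) case you mention, in which stationarity does force $\theta^\star=\functheta{s^\star}$.
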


We then show that the
function $\lyapunov: \calS \to [0,\infty)$ defined by
\begin{equation}
  \label{eq:definition-lyapunov}
  \lyapunov(s) \eqdef \kullback{\pi}{\fy_{\functheta{s}}} \eqsp,
\end{equation}
is a \emph{Lyapunov function} for the mean field $\meanfield$ and the set $\Gamma$, \ie\ for any $s
\in \calS$, $\pscal{\nabla_s \lyapunov(s)}{\meanfield(s)} \leq 0$ and $\pscal{\nabla_s
  \lyapunov(s)}{\meanfield(s)} =0$ if and only if $\meanfield(s)= 0$. The existence of a Lyapunov
function is a standard argument to prove the global asymptotic stability of the solutions of the
Robbins-Monro procedure. This property can be seen as an analog of
the monotonicity property of the EM algorithm: each unperturbed iteration
$\condexpsuffstatletter_{k+1} = \condexpsuffstatletter_k + \gamma_{k+1}
\meanfield(\condexpsuffstatletter_k)$ decreases the Kullback-Leibler divergence to the target
distribution $\pi$, provided that $\gamma_{k+1}$ is small enough.
\begin{proposition}
  \label{prop:lyapunov-function-h}
  Under Assumptions~\ref{assum:expon}--\ref{assum:reg},
  \begin{itemize}
  \item $\lyapunov(s)$ is continuously differentiable on $\calS$,
  \item for any compact subset $\mathcal{K} \subset \calS \setminus \Gamma$,
    $$
     \sup_{s \in \mathcal{K}} \pscal{\nabla_s \lyapunov(s)}{\meanfield(s)} < 0 \eqsp.
    $$
  \end{itemize}
\end{proposition}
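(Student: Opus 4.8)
The plan is to derive an explicit expression for $\nabla_s\lyapunov(s)$ by differentiating the composition $\lyapunov(s)=\kullback{\pi}{\fy_{\functheta{s}}}$ through the M-step map $\functheta{\cdot}$, and then to extract the sign of $\pscal{\nabla_s\lyapunov(s)}{\meanfield(s)}$ from that formula; continuous differentiability comes out along the way. For the $C^1$ claim I would note that $\lyapunov(s)$ differs from $-\PE_\pi[\log\fy(Y;\functheta{s})]$ only by the $s$-independent offset $\PE_\pi[\log\pi(Y)]$, so it suffices to show $\theta\mapsto\PE_\pi[\log\fy(Y;\theta)]$ is $C^1$ and then compose with the $C^1$ map $s\mapsto\functheta{s}$ of Assumption~\ref{assum:reg}(b). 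Fisher's identity~\eqref{eq:Fisher} and the exponential form~\eqref{eq:curved-ex} give $\dly(y;\theta)=-\nabla\psi(\theta)+\nabla\phi(\theta)^T\condexpsuffstat{y}{\theta}$, and differentiation under the $\pi$-integral is legitimated by dominating this quantity locally uniformly in $\theta$: the $L^p$ moment bound ($p>2$) of Assumption~\ref{assum:reg}(c) together with the continuity of $\nabla\psi,\nabla\phi$ from Assumption~\ref{assum:reg}(a) supplies an integrable dominating function and the continuity of the resulting gradient $\nabla_\theta\PE_\pi[\log\fy(Y;\theta)]=\PE_\pi[\dly(Y;\theta)]$.

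I would then evaluate this gradient at $\theta^\star=\functheta{s}$ and substitute the first-order condition $-\nabla\psi(\theta^\star)+\nabla\phi(\theta^\star)^T s=0$ defining $\functheta{s}$ in~\eqref{eq:definition-functheta}, which collapses the expression to
\begin{equation*}
\nabla_\theta\kullback{\pi}{\fy_\theta}\big|_{\theta=\theta^\star}=-\nabla\phi(\theta^\star)^T\bigl(\PE_\pi[\condexpsuffstat{Y}{\theta^\star}]-s\bigr)=-\nabla\phi(\theta^\star)^T\meanfield(s)\eqsp,
\end{equation*}
recalling the definition~\eqref{eq:recursiveEM-curvedexponential-meanfield} of $\meanfield$. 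Differentiating the identity $\nabla_\theta\ls(s;\functheta{s})=0$ with respect to $s$ gives $\nabla^2_\theta\ls(s;\theta^\star)\,\nabla_s\functheta{s}=-\nabla\phi(\theta^\star)^T$, where $\nabla^2_\theta\ls(s;\theta^\star)$ is negative definite because $\theta^\star$ is a non-degenerate interior maximiser of $\ls(s;\cdot)$ over the open set $\Theta$ (consistent with the $C^1$ regularity of $\functheta{\cdot}$). Feeding both relations into the chain rule $\nabla_s\lyapunov(s)=\left[\nabla_s\functheta{s}\right]^T\nabla_\theta\kullback{\pi}{\fy_\theta}\big|_{\theta^\star}$ yields
\begin{equation*}
\pscal{\nabla_s\lyapunov(s)}{\meanfield(s)}=v^T\left[\nabla^2_\theta\ls(s;\theta^\star)\right]^{-1}v\eqsp,\qquad v\eqdef\nabla\phi(\theta^\star)^T\meanfield(s)\eqsp.
\end{equation*}

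Since $\left[\nabla^2_\theta\ls(s;\theta^\star)\right]^{-1}$ is negative definite, this quadratic form is $\leq 0$ and vanishes exactly when $v=0$, i.e. when $\nabla\phi(\functheta{s})^T\meanfield(s)=0$. By the identity of the previous paragraph, $v=0$ says precisely that $\theta^\star=\functheta{s}$ is a stationary point of $\theta\mapsto\kullback{\pi}{\fy_\theta}$, at which point Proposition~\ref{prop:characterization-stationary-points} becomes available; the remaining step is to deduce $\meanfield(s)=0$, i.e. $s\in\Gamma$, from $v=0$. When $\nabla\phi(\theta^\star)^T$ is injective --- in particular in the canonical case $\phi(\theta)=\theta$ --- this is immediate, and in general I would close it by combining the converse half of Proposition~\ref{prop:characterization-stationary-points} with the uniqueness of the maximiser asserted in Assumption~\ref{assum:expon}(c). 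I expect this equality case to be the main obstacle: for a curved family $\nabla\phi(\theta^\star)^T$ may have a non-trivial kernel, so $v=0$ does not by linear algebra alone force $\meanfield(s)=0$, and one genuinely has to invoke the global characterisation of the roots rather than the pointwise gradient computation.

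Finally, the uniform statement follows from a routine compactness argument: on any compact $\calK\subset\calS\setminus\Gamma$ the function $s\mapsto\pscal{\nabla_s\lyapunov(s)}{\meanfield(s)}$ is continuous (by the first two paragraphs and the continuity of $\meanfield$ and of $s\mapsto\nabla^2_\theta\ls(s;\functheta{s})$), and it is strictly negative at each point of $\calK$ by the pointwise result, so its supremum over $\calK$ is attained and is strictly negative.
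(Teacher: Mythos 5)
Your proposal is correct and follows essentially the same route as the paper's proof: chain rule plus Fisher's identity and the first-order condition defining $\functheta{s}$, then implicit differentiation of $\left.\nabla_\theta\ls(s;\theta)\right|_{\theta=\functheta{s}}=0$ to reduce $\pscal{\nabla_s \lyapunov(s)}{\meanfield(s)}$ to the quadratic form $v^T\left[\left.\nabla^2_\theta\ls(s;\theta)\right|_{\theta=\functheta{s}}\right]^{-1}v$ with $v=\nabla_\theta\phi^T(\functheta{s})\,\meanfield(s)$, followed by a continuity-plus-compactness argument. The equality case you flag as the main obstacle is handled in the paper exactly as you sketch it: by Assumption~\ref{assum:expon}(c), $\functheta{s^\star}$ is the unique solution of the score equation $\nabla_\theta\psi(\theta)=\nabla_\theta\phi^T(\theta)\,s^\star$, so $v=0$ yields $\functheta{\PE_\pi\left[\condexpsuffstat{Y}{\functheta{s^\star}}\right]}=\functheta{s^\star}$, hence $\functheta{s^\star}\in\mathcal{L}$, and Proposition~\ref{prop:characterization-stationary-points} is then invoked to conclude $s^\star\in\Gamma$.
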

Using this result, we may now prove the convergence of the sequence
$\{\hcondexpsuffstat_k\}$.
Denote by $\mathcal{L} = \left\{ \theta \in \Theta : \nabla_\theta \kullback{\pi}{\fy_\theta}= 0 \right\}$ the set of
stationary points of the Kullback-Leibler divergence, and, for $x \in \rset^m$ and $A \subset \rset^m$, let $d(x,A) = \inf \{ y
\in A, |x-y| \}$.

\begin{theorem}
  \label{theo:wp1convergencerecursiveEM}
  Assume~\ref{assum:expon}--\ref{assum:reg}
  and that, in addition,
  \begin{enumerate}
    \item $0< \gamma_i < 1$, $\sum_{i=1}^\infty \gamma_i = \infty$ and $\sum_{i=1}^\infty \gamma_i^2 < \infty$,
     \item $\hcondexpsuffstat_0 \in \calS$ and with probability one, $\limsup |\hcondexpsuffstat_n| < \infty$ and $\liminf d(\hcondexpsuffstat_n,\calS^c) > 0$.
     \item The set $\lyapunov(\Gamma)$ is nowhere dense.
  \end{enumerate}
  Then, $\lim_{n \to \infty} d(\hcondexpsuffstat_n,\Gamma)= 0$ and $\lim_{n \to
    \infty} d( \htheta_n, \mathcal{L}) = 0$, with probability one.
\end{theorem}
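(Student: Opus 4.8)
The plan is to recognize this as a standard almost-sure convergence result for a Robbins-Monro stochastic approximation scheme, so the strategy is to verify the hypotheses of a known ODE-method convergence theorem (the Kushner-Clark / Delyon-type theorem for projected or stable stochastic approximation) and then transfer the conclusion about $\hcondexpsuffstat_n$ to $\htheta_n$ via the continuity of $\bar\theta$. The recursion has already been cast in the form \eqref{eq:recursiveEM-curvedexponential-RM} with mean field $\meanfield$ and martingale-increment noise $\xi_n$, so the bulk of the work is checking the standard ingredients; the Lyapunov structure is supplied by Proposition~\ref{prop:lyapunov-function-h} and the characterization of roots by Proposition~\ref{prop:characterization-stationary-points}.

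\medskip

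\noindent First I would collect the regularity and stability facts needed to invoke the convergence theorem. The stepsize conditions in item~1 ($\sum\gamma_i=\infty$, $\sum\gamma_i^2<\infty$) are exactly the Robbins-Monro conditions. The noise $\{\xi_n\}$ is by construction a martingale-difference sequence adapted to $\{\mcf_n\}$ (its conditional expectation given $\mcf_n$ vanishes by \eqref{eq:recursiveEM-curvedexponentialfamily-noise}); I would use Assumption~\ref{assum:reg}(c), the uniform $L^p$ bound with $p>2$ over compacts, together with the boundedness assumption in item~2 ($\limsup|\hcondexpsuffstat_n|<\infty$ and $\liminf d(\hcondexpsuffstat_n,\calS^c)>0$ a.s.) to guarantee that, on the event of stability, the iterates remain in a fixed compact $\calK\subset\calS$ and hence the conditional variances of $\gamma_{n+1}\xi_{n+1}$ are summable. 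This lets the martingale noise term $\sum\gamma_{n+1}\xi_{n+1}$ converge almost surely, which is the analytic heart of why the perturbations are asymptotically negligible.

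\medskip

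\noindent Next I would assemble these into the ODE argument. The deterministic mean-field ODE $\dot s = \meanfield(s)$ admits $\lyapunov$ from \eqref{eq:definition-lyapunov} as a strict Lyapunov function: Proposition~\ref{prop:lyapunov-function-h} gives $\lyapunov\in C^1$ with $\pscal{\nabla_s\lyapunov(s)}{\meanfield(s)}<0$ uniformly on compacts away from $\Gamma$, which is exactly the strict-descent condition ensuring $\Gamma$ is the attracting set. I would then cite a general stochastic-approximation convergence theorem (for instance the Kushner-Clark theorem, or the formulation in \cite{delyon:1996}) whose conclusion is that, on the stability event, $d(\hcondexpsuffstat_n,\Gamma)\to 0$ almost surely, where $\Gamma$ is characterized in \eqref{eq:definition-gamma}. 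This is where item~3, the condition that $\lyapunov(\Gamma)$ be nowhere dense, is used: it rules out the pathological situation where the limit set is a continuum of connected components on which $\lyapunov$ is non-constant, ensuring the iterates actually settle into $\Gamma$ rather than drifting along a level-set plateau. Finally, the second conclusion $d(\htheta_n,\mathcal{L})\to 0$ follows from the first by combining the continuity of $s\mapsto\functheta{s}$ (Assumption~\ref{assum:reg}(b)) with the equivalence, established in Proposition~\ref{prop:characterization-stationary-points}, between roots of $\meanfield$ and stationary points of $\kullback{\pi}{\fy_\theta}$, so that $\bar\theta(\Gamma)=\mathcal{L}$ and images of clustering sequences cluster in $\mathcal{L}$.

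\medskip

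\noindent The main obstacle I anticipate is not any single step but the rigorous handling of the localization to a random compact set: because the stability hypothesis in item~2 is only assumed to hold almost surely rather than enforced by an explicit projection, one must argue carefully on the event $\{\limsup|\hcondexpsuffstat_n|<\infty,\ \liminf d(\hcondexpsuffstat_n,\calS^c)>0\}$ that the iterates eventually live in a deterministic compact $\calK\subset\calS$, and only there do the uniform $L^p$ moment bound and the uniform strict-descent property apply. Making the passage from ``remains in $\calS$ a.s.'' to ``remains in a fixed compact subset of $\calS$'' precise—so that the cited convergence theorem can be invoked with its compactness hypotheses genuinely satisfied—is the delicate bookkeeping that the proof will need to address.
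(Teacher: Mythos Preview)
Your proposal is correct and follows essentially the same route as the paper: cast the recursion in Robbins--Monro form, control the martingale-difference noise via the uniform moment bound on compacts together with $\sum\gamma_i^2<\infty$, invoke a standard stochastic-approximation convergence theorem (the paper uses Theorem~2.3 of \cite{andrieu:moulines:priouret:2005} rather than Kushner--Clark or Delyon, but the ingredients required are the same), and finish via continuity of $\bar\theta$. The localization issue you flag is exactly what the paper handles, by fixing $\epsilon>0$, choosing a deterministic compact $\calK\subset\calS$ with $\PP(\bigcap_{k\geq n}\{\hcondexpsuffstat_k\in\calK\})\geq 1-\epsilon$, and truncating the martingale with the indicator $\1_\calK(\hcondexpsuffstat_i)$ before applying Doob's inequality.
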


The first condition of Theorem~\ref{theo:wp1convergencerecursiveEM} is standard for
decreasing step-size stochastic approximation procedures \citep{kushner:yin:1997}. It is satisfied for instance by setting
$\gamma_i = \gamma_0 i^{-\alpha}$, with $\alpha \in (1/2,1]$. The additional requirements that $\gamma_i$ be less than 1 and $\hcondexpsuffstat_0$ be chosen in $\calS$ is just meant to ensure that the whole sequence $\{\hcondexpsuffstat_k\}$ stays in $\calS$ (see Assumption~\ref{assum:expon}(c)). The rest of the second assumption of Theorem~\ref{theo:wp1convergencerecursiveEM} correspond to a stability assumption which is not trivial.
In general settings, the stability of the algorithm can be enforced by truncating the algorithm updates, either on a fixed set (see, \eg, \citealp[chapter 2]{kushner:yin:2003}) or on an expanding sequence of sets (see, \eg, \citealp[chapter 2]{chen:book:2002}, or
 \citealp{andrieu:moulines:priouret:2005}). We do not explicitly carry out these constructions  here to keep the exposition concise.

\subsection{Rate of Convergence}
\label{sec:conv:rate}
In this section, we show that when approaching convergence, the
online EM algorithm is comparable to the online gradient algorithm in~(\ref{eq:recursiveEM-Lange}).
The existence of such links is hardly surprising,
in view of the discussions in Section 4 of \cite{titterington:1984} and
Section 3 of \cite{lange:1995}, and may be seen as a counterpart, for stochastic
approximation, of the asymptotic equivalence of the gradient EM algorithm of \cite{lange:1995} and
the EM algorithm. To highlight these relations, we first express the online EM algorithm as a stochastic approximation procedure on $\theta$.
\begin{proposition}
  \label{prop:asymptoticequivalencerecursiveEMs}
  Under the assumptions of Theorem \ref{theo:wp1convergencerecursiveEM}, the
  online EM sequence $\{\htheta_n \}_{n \geq 0}$ given by
  \eqref{eq:recursiveEM-curvedexponentialfamily} follows the recursion
  \begin{equation}
    \label{eq:recursiveEM-in-theta}
    \htheta_{n+1} = \htheta_n + \gamma_{n+1} \, \FIMcomplete[-1]{\pi}{\htheta_n}  \dly(Y_{n+1};\htheta_n) + \gamma_{n+1} \rho_{n+1}
  \end{equation}
  where $\lim_{n \to \infty} \rho_n = 0$ \as\ and $\FIMcomplete{\pi}{\theta}$ is defined in \eqref{eq:definitionFIMcomplete}.
\end{proposition}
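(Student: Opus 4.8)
The plan is to push the Robbins--Monro recursion for $\hcondexpsuffstat_n$ through the M-step map $\functheta{\cdot}$ and show that, to leading order, the induced recursion on $\htheta_n$ is exactly the online-gradient recursion~\eqref{eq:recursiveEM-Lange}. Writing $\Delta_{n+1} \eqdef \condexpsuffstat{Y_{n+1}}{\htheta_n} - \hcondexpsuffstat_n$, so that $\hcondexpsuffstat_{n+1}-\hcondexpsuffstat_n = \gamma_{n+1}\Delta_{n+1}$, I would start from the exact integral form of the first-order expansion
\begin{equation*}
  \htheta_{n+1} - \htheta_n = \functheta{\hcondexpsuffstat_{n+1}} - \functheta{\hcondexpsuffstat_n} = \gamma_{n+1}\,\bar{J}_{n+1}\,\Delta_{n+1}\eqsp,\qquad \bar{J}_{n+1} \eqdef \int_0^1 \nabla_s\functheta{\hcondexpsuffstat_n + t\,\gamma_{n+1}\Delta_{n+1}}\,\rmd t\eqsp.
\end{equation*}
This is licit since $\functheta{\cdot}$ is $C^1$ on $\calS$ (Assumption~\ref{assum:reg}(b)) and, by the stability part of Theorem~\ref{theo:wp1convergencerecursiveEM}, the iterates $\hcondexpsuffstat_n$ eventually stay in a fixed compact $\calK\subset\calS$ \as

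Two algebraic identities then pin down the leading term. First, differentiating the M-step optimality condition $\nabla_\theta\ls(s;\functheta{s})=0$ implicitly in $s$ gives $\nabla_s\functheta{s} = [-\nabla^2_\theta\ls(s;\functheta{s})]^{-1}\,B(\functheta{s})$, where $B(\theta)\eqdef\nabla^2_{\theta s}\ls(s;\theta)$ is the mixed derivative of $\ls$, which is independent of $s$ because $\ls$ is affine in $s$ (it is, up to transposition, the Jacobian of $\phi$). Second, Fisher's identity~\eqref{eq:Fisher} applied to the exponential form~\eqref{eq:curved-ex} gives $\dly(Y_{n+1};\htheta_n) = -\nabla_\theta\psi(\htheta_n) + B(\htheta_n)\,\condexpsuffstat{Y_{n+1}}{\htheta_n}$, while the M-step stationarity at $\htheta_n=\functheta{\hcondexpsuffstat_n}$ reads $\nabla_\theta\psi(\htheta_n) = B(\htheta_n)\,\hcondexpsuffstat_n$; subtracting yields the clean identity $\dly(Y_{n+1};\htheta_n) = B(\htheta_n)\,\Delta_{n+1}$. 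Combining the two, $\nabla_s\functheta{\hcondexpsuffstat_n}\,\Delta_{n+1} = [-\nabla^2_\theta\ls(\hcondexpsuffstat_n;\htheta_n)]^{-1}\,\dly(Y_{n+1};\htheta_n)$, so the leading term is already the online-gradient direction, but weighted by $[-\nabla^2_\theta\ls(\hcondexpsuffstat_n;\htheta_n)]^{-1}$ rather than by $\FIMcomplete[-1]{\pi}{\htheta_n}$.

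It remains to swap these two weight matrices. Here I would use that $s\mapsto\nabla^2_\theta\ls(s;\theta)$ is affine (again because $\ls$ is affine in $s$) together with the representation, read off from~\eqref{eq:definitionFIMcomplete}, that $\FIMcomplete{\pi}{\theta} = -\nabla^2_\theta\ls(\PE_\pi[\condexpsuffstat{Y}{\theta}];\theta)$. Consequently $-\nabla^2_\theta\ls(\hcondexpsuffstat_n;\htheta_n)-\FIMcomplete{\pi}{\htheta_n}$ is linear in $\hcondexpsuffstat_n - \PE_\pi[\condexpsuffstat{Y}{\htheta_n}] = -\meanfield(\hcondexpsuffstat_n)$. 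Since $\hcondexpsuffstat_n\to\Gamma$ \as\ by Theorem~\ref{theo:wp1convergencerecursiveEM} and $\meanfield$ is continuous and vanishes on $\Gamma$, we get $\meanfield(\hcondexpsuffstat_n)\to0$ \as, hence the matrix difference tends to $0$ \as\ Because $\functheta{\cdot}$ is $C^1$, the implicit function theorem forces $\nabla^2_\theta\ls(s;\functheta{s})$ to be nonsingular, hence negative definite at the maximiser, for every $s\in\calS$; by continuity the smallest eigenvalue of $-\nabla^2_\theta\ls(\hcondexpsuffstat_n;\htheta_n)$ is bounded away from $0$ on $\calK$, so $\FIMcomplete{\pi}{\htheta_n}$ is eventually positive definite too and $[-\nabla^2_\theta\ls(\hcondexpsuffstat_n;\htheta_n)]^{-1}-\FIMcomplete[-1]{\pi}{\htheta_n}\to0$ \as

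Collecting terms, $\gamma_{n+1}\rho_{n+1}$ is $\htheta_{n+1}-\htheta_n$ minus $\gamma_{n+1}\FIMcomplete[-1]{\pi}{\htheta_n}\dly(Y_{n+1};\htheta_n)$, which reduces to
\begin{equation*}
  \rho_{n+1} = \left\{[-\nabla^2_\theta\ls(\hcondexpsuffstat_n;\htheta_n)]^{-1}-\FIMcomplete[-1]{\pi}{\htheta_n}\right\}\dly(Y_{n+1};\htheta_n) + \left\{\bar{J}_{n+1}-\nabla_s\functheta{\hcondexpsuffstat_n}\right\}\Delta_{n+1}\eqsp,
\end{equation*}
where both braced matrix factors tend to $0$ \as\ (the first by the previous paragraph, the second by uniform continuity of $\nabla_s\functheta{\cdot}$ on $\calK$ and $\hcondexpsuffstat_{n+1}-\hcondexpsuffstat_n=\gamma_{n+1}\Delta_{n+1}\to0$, itself a consequence of $\gamma_n\to0$ and $\gamma_n\xi_n\to0$ \as). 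The main obstacle is that neither $\dly(Y_{n+1};\htheta_n)=B(\htheta_n)\Delta_{n+1}$ nor $\Delta_{n+1}=\meanfield(\hcondexpsuffstat_n)+\xi_{n+1}$ vanishes: the contribution of $\meanfield(\hcondexpsuffstat_n)$ is harmless, being a product of two null sequences, but the martingale increment $\xi_{n+1}$ does not tend to zero. Thus the crux is to show that an $\mcf_n$-measurable matrix tending to $0$ \as, multiplied by $\xi_{n+1}$, still tends to $0$ \as\ I expect this to be the delicate step, as it does not follow from $L^2$-control alone; I would exploit the stronger moment bound of Assumption~\ref{assum:reg}(c) (with $p>2$) and the \as\ confinement of the iterates to $\calK$, controlling the tails of $\xi_{n+1}$ by a conditional Borel--Cantelli argument so that the vanishing prefactor dominates the occasional large fluctuations of the noise.
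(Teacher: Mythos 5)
Your proposal follows, in all its algebraic substance, exactly the route of the paper's proof: the same integral-remainder Taylor expansion of $\functheta{\cdot}$ along the increment $\gamma_{n+1}\Delta_{n+1}$ with $\Delta_{n+1}=\condexpsuffstat{Y_{n+1}}{\htheta_n}-\hcondexpsuffstat_n$, the same implicit differentiation of the M-step stationarity condition yielding $\nabla_s \functheta[T]{s} = \left[ \nabla_\theta \phi^T(\functheta{s}) \right]^T \left[ - \nabla^2_\theta \ls(s;\theta)|_{\theta=\functheta{s}} \right]^{-1}$ (the paper's \eqref{eq:derivative-htheta-2}), the same exponential-family form of Fisher's identity giving $\dly(Y_{n+1};\htheta_n)=\nabla_\theta\phi^T(\htheta_n)\,\Delta_{n+1}$ (the paper's \eqref{eq:FisherIdentity-2}), and the same swap of $[-\nabla^2_\theta\ls(\hcondexpsuffstat_n;\theta)|_{\theta=\htheta_n}]^{-1}$ for $\FIMcomplete[-1]{\pi}{\htheta_n}$ justified by $\FIMcomplete{\pi}{\theta}=-\nabla^2_\theta\ls(s;\theta)|_{s=\PE_\pi[\condexpsuffstat{Y}{\theta}]}$ and $\meanfield(\hcondexpsuffstat_n)\to0$; your two-term expression for $\rho_{n+1}$ is precisely the paper's \eqref{eq:tmp:first-order-term} plus Taylor remainder. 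Your use of affinity of $\ls$ in $s$ is a marginally cleaner route to the paper's \eqref{eq:negligeability} than its continuity argument; on the other hand your claim that $C^1$-smoothness of $\functheta{\cdot}$ ``forces'' nonsingularity of the Hessian via the implicit function theorem is backwards --- smoothness of an argmax map does not imply a nondegenerate Hessian (consider $\ls(s;\theta)=-(\theta-s)^4$, whose argmax $\functheta{s}=s$ is smooth while the Hessian vanishes there); the paper instead asserts negative definiteness directly from Assumption~\ref{assum:expon}(c).

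The genuine divergence is the step you yourself flag as the crux, and there your plan would fail, while the paper quietly lowers the bar. You want an $\mcf_n$-measurable prefactor tending to $0$ almost surely, \emph{with no rate}, multiplied by the martingale increment $\xi_{n+1}$, to tend to $0$ almost surely, via Borel--Cantelli and the $p>2$ moment bound of Assumption~\ref{assum:reg}(c). This cannot close: a uniform $p$-th moment bound gives $|\xi_n|=o(n^{1/q})$ a.s.\ for each $q<p$, but it is compatible with $|\xi_n|$ exceeding $n^{1/r}$ infinitely often for any fixed $r>p$; hence the product vanishes only if the prefactor decays at least polynomially fast, and neither Theorem~\ref{theo:wp1convergencerecursiveEM} (which gives no rate whatsoever for $\meanfield(\hcondexpsuffstat_n)\to0$) nor Assumption~\ref{assum:reg}(b) (which makes $\nabla_s\functheta{\cdot}$ continuous but not Lipschitz, so the remainder prefactor also has no rate) supplies one. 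What the paper actually does is work in probability throughout: Lemma~\ref{lem:bounded-in-probability} gives boundedness \emph{in probability} of $\condexpsuffstat{Y_{n+1}}{\functheta{\hcondexpsuffstat_n}}$, and every negligibility estimate in its proof is an $o_\PP(1)$ statement, so what the paper's argument establishes is $\rho_n=o_\PP(1)$ --- convergence in probability --- notwithstanding the almost-sure wording of the proposition (its proof even announces ``$\lim_n r_n=0$ a.s.'' and then concludes $r_n=o_\PP(1)$). So you should either weaken your target to $o_\PP(1)$, in which case your argument closes exactly as the paper's does, with the bounded-in-probability lemma replacing your Borel--Cantelli step, or record that the literal almost-sure claim is established neither by your sketch nor by the paper's own proof under the stated assumptions.
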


Hence, the online EM algorithm is equivalent, when approaching convergence, to the online gradient
algorithm defined in (\ref{eq:recursiveEM-Lange}) which coincides with Titterington's online
algorithm with $\FIMcomplete{\pi}{\htheta_n}$ substituted for $\FIMcomplete{}{\htheta_n}$. It is
remarkable that the online EM algorithm can achieve a convergence performance similar to that of
the online gradient algorithm without explicit matrix approximation nor inversion. Note that, as previously discussed in Section~\ref{sec:intro}, in the particular case of canonical exponential families, $\FIMcomplete{\pi}{\theta}$ and $\FIMcomplete{}{\theta}$ coincide and the proposed online EM algorithm is thus also equivalent (near convergence) to Titterington's online algorithm.

Although the recursion \eqref{eq:recursiveEM-in-theta} will not lead to
asymptotic efficiency, we can, under appropriate additional conditions,
guarantee $\gamma_n^{-1/2}$-consistency and asymptotic normality. We use the
weak convergence result presented in \citealp[Theorem 1]{pelletier:1998}.

\begin{theorem}
  \label{theo:weak-convergence-REM}
  Under the assumptions of Theorem \ref{theo:wp1convergencerecursiveEM}, let $\theta^\star$ be a (possibly local) minimum of the Kullback-Leibler
  divergence: $\theta \mapsto \kullback{\pi}{\fy_\theta}$.  Denote by
  \begin{align*}
    &H(\theta^\star) \eqdef \FIMcomplete[-1]{\pi}{\theta^\star} \left[ - \nabla_\theta^2 \left. \kullback{\pi}{\fy_\theta} \right|_{\theta= \theta^\star} \right] \eqsp, \\
    &\Gamma(\theta^\star) \eqdef \FIMcomplete[-1]{\pi}{\theta^\star} \, \PE_\pi \left(
      \dly(Y;{\theta^\star}) \left\{\dly(Y;{\theta^\star})\right\}^T\right)
    \, \FIMcomplete[-1]{\pi}{\theta^\star} \eqsp.
  \end{align*}
  Then,
  \begin{enumerate}
  \item \label{item:weak-convergence-REM:stable} $H(\theta^\star)$ is a stable
    matrix whose eigenvalues have their real part upper bounded by
    $-\lambda(\theta^\star)$, where $\lambda(\theta^\star) > 0$.
  \item \label{item:weak-convergence-REM:WK} Let $\gamma_n = \gamma_0
    n^{-\alpha}$, where $\gamma_0$ may be chosen freely in $(0,1)$ when $\alpha \in (1/2,1)$ but must satisfy
$\gamma_0 > \lambda(\theta^\star)$ when $\alpha = 1$; then, on the event
    $\Omega(\theta^\star) = \{ \lim_{n \to \infty} \htheta_n =
    \theta^\star\}$, the sequence $\gamma_n^{-1/2} \left( \htheta_n -
      \theta^\star \right)$ converges in distribution to a zero mean Gaussian
    distribution with covariance $\Sigma(\theta^\star)$, where
    $\Sigma(\theta^\star)$ is the solution of the Lyapunov equation
    \begin{equation}
      \label{eq:LyapunovEquation}
      \left( H(\theta^\star) + \zeta \mathrm{Id} \right) \Sigma(\theta^\star) +  \Sigma(\theta^\star) \left( H^T(\theta^\star) + \zeta \mathrm{Id} \right) = - \Gamma(\theta^\star) \eqsp,
    \end{equation}
    where $\zeta=0$ if $\alpha \in (1/2,1)$ and $\zeta=
    \gamma_0^{-1}$ if $\alpha = 1$, and, $\mathrm{Id}$ denotes the identity matrix.
  \end{enumerate}
\end{theorem}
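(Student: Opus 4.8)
The plan is to establish Theorem~\ref{theo:weak-convergence-REM} by feeding the recursion~\eqref{eq:recursiveEM-in-theta} into the general weak-convergence machinery of \citet[Theorem 1]{pelletier:1998} for Robbins-Monro schemes. Proposition~\ref{prop:asymptoticequivalencerecursiveEMs} already tells us that, near convergence, the $\theta$-recursion has mean field $\theta \mapsto \FIMcomplete[-1]{\pi}{\theta}\,\dly(Y;\theta)$ up to a term $\gamma_{n+1}\rho_{n+1}$ with $\rho_n \to 0$ almost surely. The two things I must supply before the cited theorem applies are (i) the identification of the \emph{linearised drift} of this recursion at the equilibrium $\theta^\star$, which is exactly the matrix $H(\theta^\star)$ appearing in the statement, and (ii) the verification that $H(\theta^\star)$ is a stable matrix (item~\ref{item:weak-convergence-REM:stable}), which is the hypothesis of Pelletier's theorem that controls the local rate.

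For step~\ref{item:weak-convergence-REM:stable}, I would first compute the Jacobian of the mean field. Since $\PE_\pi[\dly(Y;\theta)] = -\nabla_\theta \kullback{\pi}{\fy_\theta}$ (the expected score is minus the gradient of the Kullback-Leibler divergence, because $\PE_\pi[\nabla_\theta \log \pi(Y)] = 0$), the drift of~\eqref{eq:recursiveEM-in-theta} in expectation is $-\FIMcomplete[-1]{\pi}{\theta}\,\nabla_\theta \kullback{\pi}{\fy_\theta}$. Differentiating at the root $\theta^\star$, where $\nabla_\theta \kullback{\pi}{\fy_\theta}$ vanishes, the product rule kills the term differentiating $\FIMcomplete[-1]{\pi}{\theta}$ and leaves exactly $H(\theta^\star) = \FIMcomplete[-1]{\pi}{\theta^\star}\bigl[-\nabla_\theta^2 \kullback{\pi}{\fy_\theta}\big|_{\theta^\star}\bigr]$. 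Stability then follows because $\theta^\star$ is assumed to be a (local) \emph{minimum} of the divergence, so the Hessian $-\nabla_\theta^2 \kullback{\pi}{\fy_\theta}|_{\theta^\star}$ is negative semidefinite — in fact I would assume it negative definite at a nondegenerate minimum — while $\FIMcomplete{\pi}{\theta^\star}$ is positive definite by Assumption~\ref{assum:expon} applied to the complete-data model. A product of a symmetric positive-definite matrix and a symmetric negative-definite matrix has all eigenvalues with strictly negative real part (this is a standard congruence/similarity argument: $A^{-1}B$ with $A \succ 0$, $B \prec 0$ is similar to $A^{-1/2} B A^{-1/2} \prec 0$), giving the uniform bound $-\lambda(\theta^\star)$ with $\lambda(\theta^\star) > 0$.

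For step~\ref{item:weak-convergence-REM:WK}, once stability is in hand I would check the remaining hypotheses of Pelletier's theorem: the step-size conditions $\gamma_n = \gamma_0 n^{-\alpha}$ (with the threshold $\gamma_0 > \lambda(\theta^\star)$ in the boundary case $\alpha=1$ ensuring the effective drift matrix $H(\theta^\star)+\zeta\,\mathrm{Id}$ remains stable after the $\zeta = \gamma_0^{-1}$ shift peculiar to $\alpha=1$); the martingale-difference structure and conditional second-moment control of the noise, for which the conditional covariance of $\dly(Y_{n+1};\htheta_n)$ converges to $\PE_\pi(\dly(Y;\theta^\star)\{\dly(Y;\theta^\star)\}^T)$, yielding after pre- and post-multiplication by $\FIMcomplete[-1]{\pi}{\theta^\star}$ the limiting noise covariance $\Gamma(\theta^\star)$; and the negligibility of the remainder $\rho_{n+1}$, which is where I would lean hardest on the almost-sure convergence $\rho_n \to 0$ from Proposition~\ref{prop:asymptoticequivalencerecursiveEMs} together with Assumption~\ref{assum:reg}(c), the $L^p$ bound with $p>2$, to dominate the higher moments Pelletier requires. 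The conclusion is then that on $\Omega(\theta^\star)$ the rescaled error $\gamma_n^{-1/2}(\htheta_n - \theta^\star)$ is asymptotically Gaussian with covariance solving the Lyapunov equation~\eqref{eq:LyapunovEquation}.

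The \textbf{main obstacle} I anticipate is not the stability computation, which is clean, but the careful handling of the remainder term $\gamma_{n+1}\rho_{n+1}$: Pelletier's framework is stated for a recursion with a specified mean field plus a martingale-difference noise, whereas~\eqref{eq:recursiveEM-in-theta} carries an extra adapted (non-martingale) perturbation. I would need to argue that an almost-surely vanishing $\rho_n$, summed against $\gamma_n$, contributes nothing to the $\gamma_n^{-1/2}$-scaled limit — essentially showing $\gamma_n^{-1/2}\sum_{k} \gamma_k \rho_k \to 0$ on the convergence event — and that this perturbation falls within the ``negligible remainder'' class covered by the cited theorem. A secondary subtlety is that the whole statement is conditional on the event $\Omega(\theta^\star)$ of actual convergence to the specific local minimum, so all moment and convergence estimates must be localised to that event; this is routine but must be stated carefully, typically by working on a stopped/localised version of the recursion confined to a neighbourhood of $\theta^\star$.
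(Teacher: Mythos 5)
Your proposal is correct and takes essentially the same route as the paper: it too uses Proposition~\ref{prop:asymptoticequivalencerecursiveEMs} to view the recursion as a Robbins--Monro scheme in $\theta$ with mean field $-\FIMcomplete[-1]{\pi}{\theta}\,\nabla_\theta \kullback{\pi}{\fy_\theta}$, identifies its Jacobian at $\theta^\star$ with $H(\theta^\star)$, proves stability by the same similarity argument (namely that $\FIMcomplete[-1]{\pi}{\theta^\star} \left. \nabla^2_\theta \kullback{\pi}{\fy_\theta}\right|_{\theta=\theta^\star}$ has the same eigenvalues as the symmetric positive definite matrix $\FIMcomplete[-1/2]{\pi}{\theta^\star} \left. \nabla^2_\theta \kullback{\pi}{\fy_\theta}\right|_{\theta=\theta^\star} \FIMcomplete[-1/2]{\pi}{\theta^\star}$), and then invokes \cite[Theorem 1]{pelletier:1998} for the distributional limit. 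If anything you are more careful than the paper, which asserts positive definiteness of the Hessian at a (possibly local) minimum without your nondegeneracy caveat and declares the second assertion to follow \emph{directly} from Pelletier's theorem, leaving implicit the noise-covariance identification, the treatment of the remainder $\rho_n$, and the localisation to the event $\Omega(\theta^\star)$ that you rightly single out as the delicate points.
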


Solving equation (28) is easy for a well-specified model, \ie, when $\pi=\fy_{\theta^\star}$, as
the FIM $\FIMincomplete{\theta^\star}$ that is associated with the (observed) data model then
satisfies
\begin{multline*}
  \FIMincomplete{\theta^\star} = -
\PE_{\theta^\star} \left[ \nabla_\theta^2 \log \fy(Y;{\theta^\star}) \right] \\
  = \nabla_\theta^2 \left.
  \kullback{\fy_{\theta^\star}}{\fy_\theta} \right|_{\theta= \theta^\star} =
\PE_\pi \left( \dly(Y;{\theta^\star}) \left\{\dly(Y;{\theta^\star})\right\}^T \right) \eqsp .
\end{multline*}
When $\zeta= 0$, the solution of the Lyapunov equation is thus given by $\Sigma(\theta^\star) =
\FIMcomplete[-1]{\pi}{\theta^\star}/2$. The model being well-specified also implies that
$\FIMcomplete{\pi}{\theta^\star} = \FIMcomplete{}{\theta^\star}$, and, hence, the asymptotic
covariance matrix is given by half the inverse of the complete data FIM in this case.
When $\zeta \ne 0$, the Lyapunov equation cannot be solved in
explicitly, except when the parameter is scalar (the result then coincides with
\citealp[Theorem 1]{titterington:1984}). Note that using $\gamma_n = \gamma_0 n^{-\alpha}$ with $\alpha
= 1$ provides the optimal convergence rate of $1/\sqrt{n}$ but only at the price of a constraint on
the scale $\gamma_0$, which is usually impossible to check in practice. On the other hand, using
$\alpha \in (1/2,1)$ results in a slower convergence rate but without constraint on the scale
$\gamma_0$ of the step-size (except for the fact that it has to be smaller than 1).

To circumvent this difficulty, we recommend to use the so-called Polyak-Ruppert averaging technique \citep{polyak:1990,ruppert:1988} as a post-processing step. Following \cite{polyak:1990} ---see
also \cite{polyak:juditsky:1992,mokakdem:pelletier:2006}---, if $\gamma= \gamma_0
n^{-\alpha}$, with $\alpha \in (1/2,1)$, then the running average
\begin{equation}
  \label{eq:SGA:averaging}
  \tilde{\theta}_n \eqdef n^{-1} \sum_{j=n_0}^n \htheta_j \eqsp, \qquad n \geq 1
\end{equation}
converges at rate $1/\sqrt{n}$, for all values of $\gamma_0$.
Furthermore, on the event $\Omega(\theta^\star)$ defined in Theorem~\ref{theo:weak-convergence-REM} above, $\sqrt{n}( \tilde{\theta}_n -
\theta^\star)$ is asymptotically normal, with asymptotic covariance matrix
\begin{multline}
  \label{eq:CovarianceMatrixAveraging}
  \overline{\Sigma}(\theta^\star)= H^{-1}(\theta^*) \Gamma(\theta^\star) H^{-1}(\theta^\star) = \\
  \left[ - \nabla_\theta^2 \left. \kullback{\pi}{\fy_\theta} \right|_{\theta=
      \theta^\star} \right]^{-1} \pi \left( \dly_{\theta^\star}
    \left\{\dly_{\theta^\star}\right\}^T \right) \left[ - \nabla_\theta^2 \left.
      \kullback{\pi}{\fy_\theta} \right|_{\theta= \theta^\star} \right]^{-1}
  \eqsp,
\end{multline}
which is known to be optimal \citep{kushner:yin:1997}. If $\pi =
\fy_{\theta^\star}$, the previous result shows that the averaged sequence
$\tilde{\theta}_n$ is an asymptotically efficient sequence of estimates of
$\theta^\star$, \ie\ the asymptotic covariance of $\sqrt{n} (\tilde{\theta}_n -
\theta^\star)$ is equal to the inverse of the (observed data) FIM $\FIMincomplete{\theta^\star}$.

\section{Application to Mixtures of Gaussian Regressions}
\label{sec:regmix}
To illustrate the performance of the proposed method, we consider a regression model which, as discussed in Section~\ref{sec:exts}, corresponds to a case where the complete data FIM is not available. In contrast, we illustrate below the fact that the proposed algorithm, without explicitly requesting the determination of a weighting matrix does provide asymptotically efficient parameter estimates when Polyak-Ruppert averaging is used.

The model we consider is a finite mixture of Gaussian linear regressions, where the complete data
consists of the response variable $R$, here assumed to be scalar for simplicity, the
$\nbz$-dimensional vector $Z$ that contains the explanatory variables, and, $W$ which corresponds,
as in the example of Section~\ref{sec:exp:pois}, to a latent mixture indicator taking its value in
the finite set $\{1,\dots,m\}$. We assume that given $W=j$ and $Z$, $R$ is distributed as a
$\mathcal{N}(\beta_j^T Z, \sigma_j^2)$ Gaussian variable, while $W$ is independent of $Z$ and such
that $\PP_\theta(W = j) = \omega_j$. Thus the parameters of the model are the mixture weights
$\omega_j$ and the regression vectors $\beta_j$ and variances $\sigma_j^2$, for $j=1,\dots,m$. As
is usually the case in conditional regression models, we specify only the part of the complete data
likelihood that depends on the parameters, without explicitly modelling the marginal distribution
of the vector of regressors $Z$. In terms of our general notations, the complete data $X$ is the triple
$(R,Z,W)$, the observed data is the couple $(R,Z)$ and the model is not well-specified, in the
sense that the distribution of the observation $(R,Z)$ is not fully determined by the model.  We
refer to \cite{hurn:justel:robert:2003} or \cite{gruen:leisch:2007} and references therein for more
information on mixture of regression models and their practical use.

In the mixture of  Gaussian regressions model, the part of the complete data log-likelihood that depends on the parameters may be written as
\begin{equation}
  \label{eq:RegressionMixture:CompleteLikelihood}
  \log \fx(r,w,z;\theta)
  = \sum_{j=1}^m \left\{ \log (\mixtureweight{j}) - \frac12 \left[\log\sigma_j^2 + \frac{\left(r-\beta_j^T z\right)^2}{\sigma_j^2}\right] \right\} \delta_{w,j} \eqsp,
\end{equation}
where $\delta$ denotes, as before, the Kronecker delta. To
put~\eqref{eq:RegressionMixture:CompleteLikelihood} in the form given
in~~(\ref{eq:curved-ex}), one needs to define the statistics $S =
(S_{1,j},S_{2,j},S_{3,j},S_{4,j})_{1\leq j\leq m}$ where
\begin{align}
\label{eq:sufficient-statistics:regressionmixture}
 & S_{1,j}(r,w,z) = \delta_{w,j} & \quad & \text{(scalar)} \eqsp , \nonumber\\
 & S_{2,j}(r,w,z) = \delta_{w,j} r z & & \text{($\nbz \times 1$)} \eqsp , \nonumber \\
 & S_{3,j}(r,w,z) = \delta_{w,j} zz^T & & \text{($\nbz \times \nbz$)} \eqsp , \nonumber \\
 & S_{4,j}(r,w,z) = \delta_{w,j} r^2 & & \text{(scalar)} \eqsp .
\end{align}
As in the simple Poisson mixture example of Section~\ref{sec:exp:pois}, the E-step statistics only depend on the conditional expectation of the indicator variable $W$ through
\begin{align}
 & \condexpsuffstatletter_{1,j}(r,z;\theta) = \bar{w}_{j}(r,z;\theta) \eqsp , \nonumber\\
 & \condexpsuffstatletter_{2,j}(r,z;\theta) = \bar{w}_{j}(r,z;\theta) r z \eqsp , \nonumber \\
 & \condexpsuffstatletter_{3,j}(r,z;\theta) = \bar{w}_{j}(r,z;\theta) zz^T \eqsp , \nonumber \\
 & \condexpsuffstatletter_{4,j}(r,z;\theta) = \bar{w}_{j}(r,z;\theta) r^2 \eqsp ,
 \label{eq:E-step:regressionmixture}
\end{align}
where
\begin{equation*}
  \bar{w}_{j}(r,z;\theta) \eqdef \PE_\theta[W=j|R=r,Z=z] = \frac{\frac{\mixtureweight{j}}{\sigma_j} \exp\left[-\frac12 \frac{(r-\beta_j^T z)^2}{\sigma_j^2}\right]}{\sum_{l=1}^m \frac{\mixtureweight{l}}{\sigma_l} \exp\left[-\frac12 \frac{(r-\beta_l^T z)^2}{\sigma_l^2}\right]}  \eqsp .
\end{equation*}
Finally, it is easily checked that that the M-step is equivalent to an application of the
function $\bar\theta : s \mapsto \left(\bar\omega_j(s),\bar\beta_j(s),\bar\sigma_j(s)\right)_{1\leq j\leq m}$ where
\begin{align}
 & \bar\omega_j(s) = s_{1,j} \eqsp , \nonumber \\
 & \bar{\beta}_j(s) =  s_{3,j}^{-1} s_{2,j} \eqsp , \nonumber \\
 & \bar{\sigma}_j^2(s) = \left(s_{4,j} - \bar{\beta}_j^T(s) s_{2,j} \right) / s_{1,j} \eqsp \eqsp . \label{eq:M-step:regressionmixture}
\end{align}

In this example, the role played by the set $\calS$ in Assumption~\ref{assum:expon}(c) is important: In order to apply~\eqref{eq:M-step:regressionmixture}, its required that the scalars $s_{1,j}$ belong to the open set $(0,1)$ and that the $(\nbz+1)$-dimensional matrices block-defined by
\[
  M_j =
  \begin{pmatrix}
    s_{3,j} & s_{2,j} \\
    s_{2,j}^T & s_{4,j} \\
  \end{pmatrix} \eqsp ,
\]
be positive definite, since $\bar{\sigma}_j^2(s)$ is, up to normalisation by $s_{1,j}$, the Schur complement of $M_j$. These constraints, for $j=1,\dots,m$ define the set $\calS$ which is indeed open and convex. The function $\condexpsuffstatletter$ defined in~\eqref{eq:E-step:regressionmixture} however \emph{never} produces values of $s$ which are in $\calS$. In particular, $\condexpsuffstatletter_{3,j}(r,z;\theta)$ is a rank one matrix which is not invertible (unless $\nbz=1$). Hence the importance of using an initialisation $s_0$ which is chosen in $\calS$. For the simulations below, we took care of this issue by inhibiting the parameter re-estimation step in~\eqref{eq:M-step:regressionmixture} for the first twenty observations of each run. In other words, the first twenty observations are used only to build a up a value of $\hcondexpsuffstat_{20}$, using the first line of~\eqref{eq:recursiveEM-curvedexponentialfamily},  which is in  $\calS$ with great probability.

For illustration purpose, we consider a variation of a simple simulation example used in the \verb+flexmix+ \verb+R+ package \citep{leisch:2004}, where $m = 2$, $\omega_1 = \omega_2 = 0.5$, and
\begin{equation*}
  R =
  \begin{cases}
     5 U + V & \text{(when $W = 1$)} \\
     15 + 10 \, U - U^2 + V & \text{(when $W = 2$)} \\
  \end{cases} \eqsp ,
\end{equation*}
where $U \sim \operatorname{Unif}(0,10)$ and $V \sim \mathcal{N}(0,9^2)$. In order to balance the asymptotic variances of the regression parameters (see below) we used $Z^T = (1, U,
U^2/10)$ as the vector of regressors, hence the actual value of the regression parameter is $\beta_1^T = (0, 5, 0)$ for the first component and $\beta_2^T = (15, 10, -10)$. The corresponding data is shown in Figure~\ref{fig:batch_data} where the points corresponding to both classes are plotted differently for illustration purpose, despite the fact that only unsupervised estimation is considered here. The labelling is indeed rather ambiguous in this case as the posterior probability of belonging to one of the two classes is between 0.25 and 0.75 for about 40\% of the points.

\begin{figure}[hbtp] \centering
  \includegraphics[width=10cm]{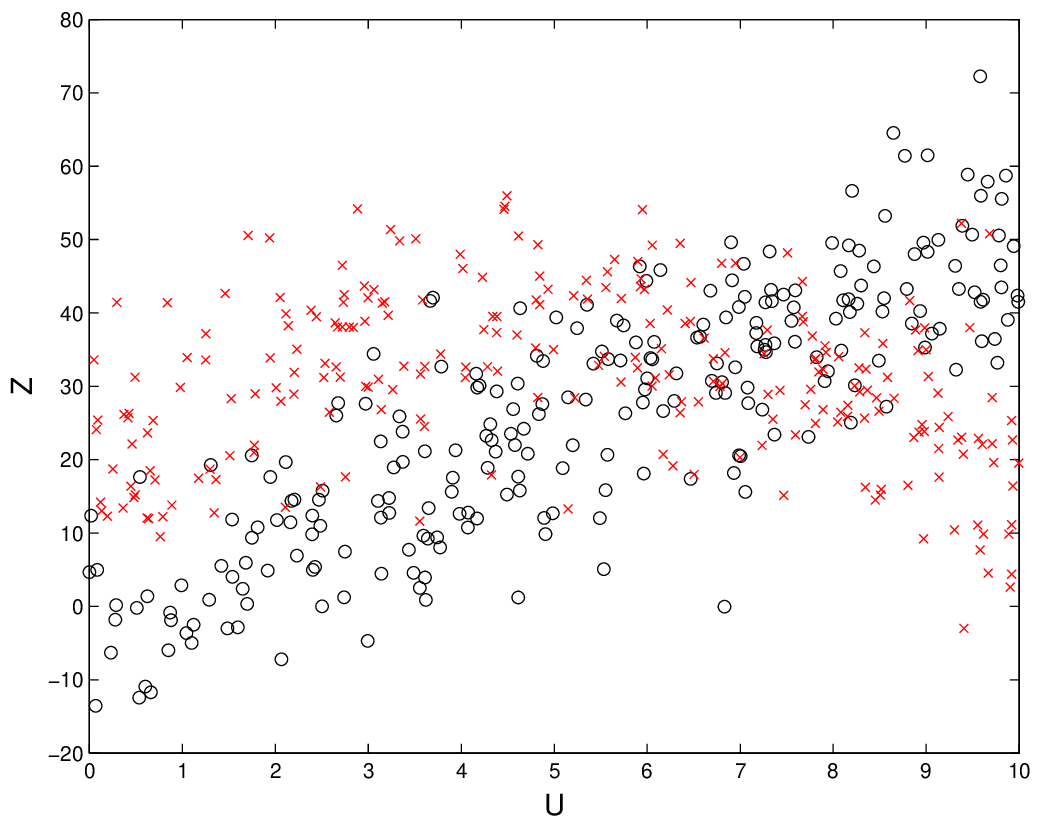}
  \caption{500 points drawn from the model: circles, points drawn from the first class, and, crosses, points drawn from the second class (the algorithms discussed here ignore the class labels).}
  \label{fig:batch_data}
\end{figure}

Clearly, the mixture of regressions model is such that the associated complete data sufficient
likelihood has the form given in Assumption~\ref{assum:expon}(a), where the marginal density of the
explanatory variables in $Z$ appears in the term $h(x)$ since it does not depend on the parameters.
Hence the previous theory applies straightforwardly and the online EM algorithm may be used to
maximise the conditional likelihood function of the responses $R$ given the regressors $Z$.
However, the explicit evaluation of the complete data FIM $\FIMcomplete{}{\theta}$ defined
in~\eqref{eq:completeobservationFIM} is not an option here because the model does not specify the
marginal distribution of $Z$. Titterington's online algorithm may thus not be used directly.
Applying the recursion in (\ref{eq:recursiveEM-Titterington}) without a weighting matrix is not recommended here as the regression parameters are greatly correlated due to the
non-orthogonality of the regressors.

\begin{figure}[p] \centering
  \includegraphics[width=10cm]{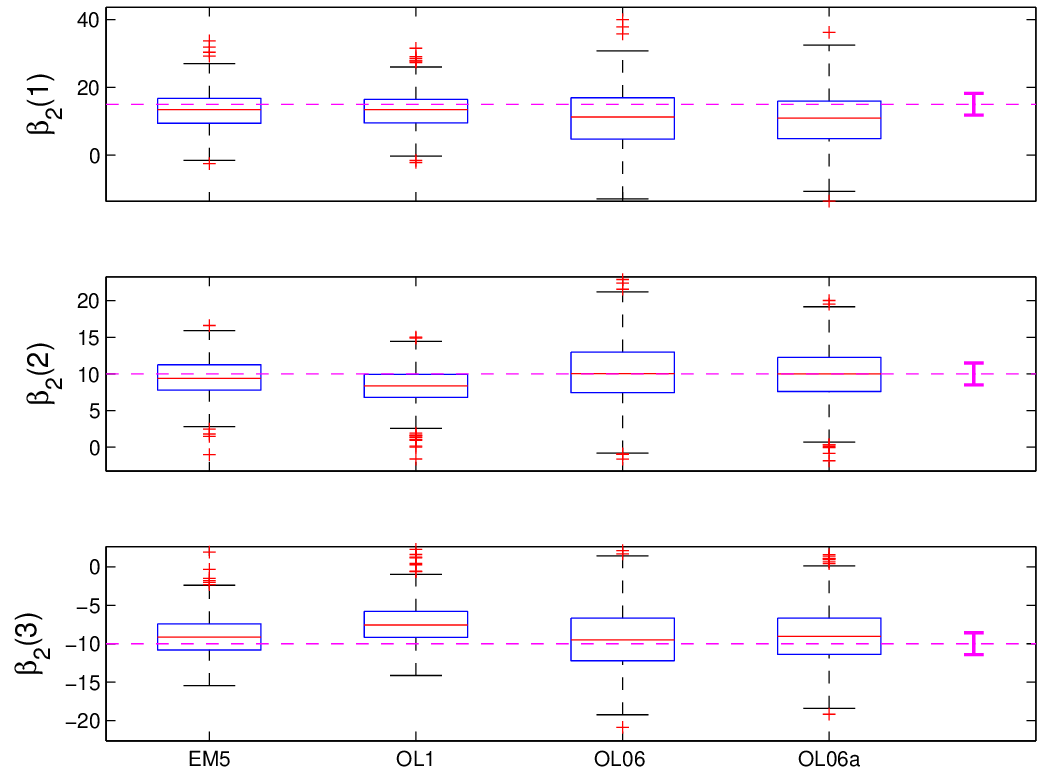}
  \caption{Box-and-whisker plots of the three components of $\beta_2$ (from top to bottom)
    estimated from 500 independent runs of length $n=100$ for, EM5: five iterations of the batch EM
    algorithm, OL1: online EM algorithm with $\gamma_i = 1/i$, OL06: online EM algorithm with
    $\gamma_i = 1/i^{0.6}$, OL06a: online EM algorithm with $\gamma_i = 1/i^{0.6}$ and averaging
    started from the 50th iteration. The horizontal dashed line corresponds to the actual parameter
    value and the interval in bold at the right of each plot to the interquartile range
    deduced from the asymptotic normal approximation.}
  \label{fig:batch_n100}
\end{figure}

\begin{figure}[p] \centering
  \includegraphics[width=10cm]{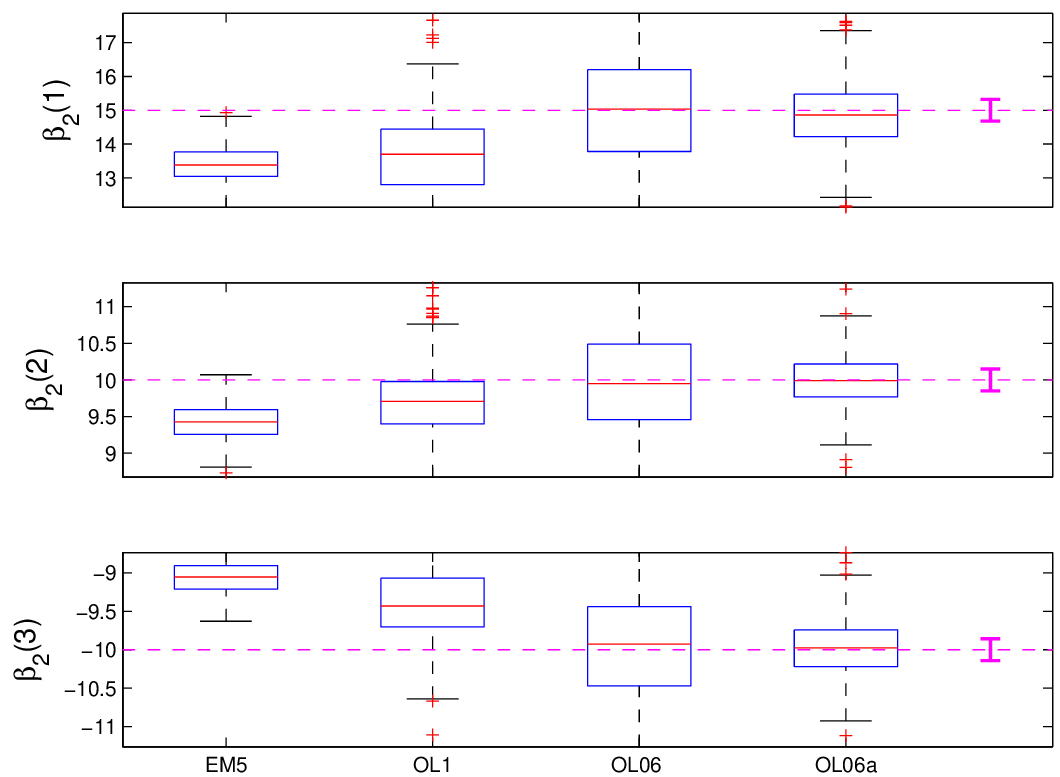}
  \caption{Same plots as in Figure~\ref{fig:batch_n100} for signals of length $n = 10,000$ (OL06a uses averaging started from the $5,000$th iteration).}
  \label{fig:batch_n10000}
\end{figure}

In order to determine a suitable weighting matrix, one can use Fisher's relation~\eqref{eq:Fisher}
which gives, for the regression parameters,
\begin{multline*}
  \nabla_{\beta_j} \log g(r|z;\theta) = \PE_\theta \left[\left. \nabla_{\beta_j} \log f(R,W|Z;\theta) \right| R=r, Z=z; \theta\right] \\
   = \PE_\theta \left[\left. \delta_{W,j} \frac{(R-\beta_j^T Z) Z}{\sigma_j^2} \right| R=r, Z=z; \theta\right] = \bar{w}_{j}(r,z;\theta) \frac{(r-\beta_j^T z) z}{\sigma_j^2} \eqsp .
  \label{eq:regmix:Fisher}
\end{multline*}
Hence, if we assume that the model is well-specified, the (observed) FIM $\FIMincomplete{\theta}$ may be approximated, near convergence, by computing
empirical averages of the form
\[
  1/n \sum_{i=1}^n \nabla_{\beta_j} \log g(R_i|Z_i;\theta) \left\{\nabla_{\beta_j}
  \log g(R_i|Z_i;\theta) \right\}^T \eqsp .
\]
As the online EM algorithm does not require such
computations, this estimate has been used only to determine the FIM at the actual parameter value for
comparison purpose. It is easily checked that due to the linearity of the model and the fact that
both components have equal weights and variance, the covariance matrices for $\beta_1$ and
$\beta_2$ are the same. The numerical approximation determined from a million simulated
observations yields asymptotic standard deviations of $(47.8, 22.1, 21.1)$ for the coordinates of
$\beta_j$, with an associated correlation matrix of
\[
\begin{pmatrix}
  1 & -0.87 & 0.75 \\
  -0.87 & 1 & -0.97 \\
  0.75 & -0.97 & 1
\end{pmatrix} \eqsp .
\]
As noted above, the coordinates of the regression vector are very correlated which would make the unweighted parameter-space stochastic approximation algorithm (\ie, \eqref{eq:recursiveEM-Titterington} with an identity matrix instead of $\FIMcomplete[-1]{}{\htheta_n}$) very inefficient.

\begin{figure}[hbtp] \centering
  \includegraphics[width=10cm]{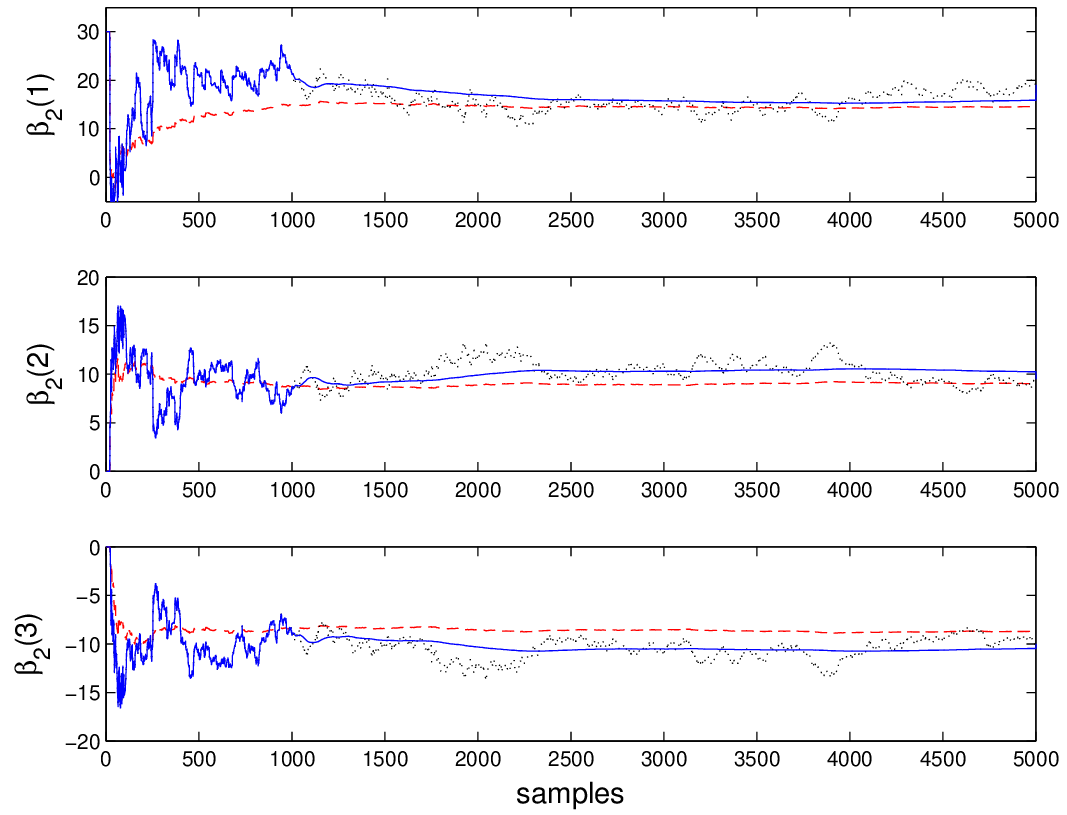}
  \caption{Example of parameters trajectories for the three components of $\beta_2$ (from top to bottom) for a signal of length $n = 5,000$: OL1, dashed line, OL06 doted line, OL06a solid line (with averaging started after $1,000$ iterations.}
  \label{fig:batch_traj}
\end{figure}

For run-lengths of $n=100$ and $n=10,000$ observations, we illustrate the performance of the following four algorithmic options~:
\begin{description}
\item[EM5] Five iterations of the batch EM algorithm, using the whole data.
\item[OL1] The online EM algorithm with step size $\gamma_n = 1/n$.
\item[OL06] The online EM algorithm with step size $\gamma_n = n^{-0.6}$. 
\item[OL06a] The online EM algorithm with step size $\gamma_n = n^{-0.6}$ and averaging started from $n_0 = n/2$ according to (\ref{eq:SGA:averaging}). 
\end{description}
Note that whereas OL1 and OL06a have the same computational complexity (as the averaging
post-processing has a negligible cost), EM5 is significantly more costly requiring five times as
many E-step computations; it is also non-recursive. All algorithm are started from the same point
and run for 500 independent simulated replicas of the data. The results (for $\beta_2$) are
summarised as box-and-whisker plots in Figure~\ref{fig:batch_n100}, for $n=100$, and
Figure~\ref{fig:batch_n10000} for $n=10,000$. Comparing both figures, one observes that OL06a is
the only approach which appears to be consistent with a variance compatible with the asymptotic
interquartile range shown on the right of each plot. EM5 (five iterations of batch EM) is
clearly the method which has the less variability but Figure~\ref{fig:batch_n10000} suggests that
it is not $1/\sqrt{n}$ consistent, which was indeed confirmed using longer runs not shown here.
This observation supports the claim of \cite{neal:hinton:1999} that, for
large sample sizes, online EM approaches are more efficient, from a computational point of view, than the batch EM algorithm which requires several iterations to converge properly.
The online EM with step size $\gamma_n = 1/n$ (OL1) presents a bias which becomes very significant
when $n$ increases. According to Theorem~\ref{theo:weak-convergence-REM}, this problem could be
avoided (asymptotically) by choosing a sufficiently small value of $\gamma_0$. For fixed $n$
however, lowering $\gamma_0$ can only reduce the perceived speed of convergence, which is already
very slow, as illustrated by Figure~\ref{fig:batch_traj}. In contrast, the online EM algorithm with
Polyak-Ruppert averaging (OL06a) appears to be very efficient: averaging significantly reduces the
variability of the OL06 estimate, reducing it to a level which is consistent with the asymptotic
interquartile range, while maintaining a systematic bias which vanishes as $n$ increases, as
expected.

\section{Conclusion}
Compared to other alternatives, the main advantages of the proposed approach to online parameter
estimation in latent data models are its analogy with the standard batch EM algorithm, which makes
the online algorithm easy to implement, and its provably optimal convergence behaviour. In
addition, the combination of a slow parameter decrease ($\gamma_n = n^{-0.5+\epsilon}$ being a
typical choice) with Polyak-Ruppert averaging appears to be very robust.

A limitation is the fact that the function $\functheta{s}$ has to be explicit, which, for instance,
would not be the case for mixture of regression models with generalised link functions. Another
extension of interest concerns non independent models and in particular hidden Markov models or
Markov random fields.

\bibliography{RecursiveEM}
\bibliographystyle{abbrvnat}

\appendix
\section{Proofs}
\label{sec:proofs}
For $\mu = (\mu_1, \dots, \mu_m)^T$ a differentiable function from $\Theta$ to $\rset^m$, we define
by $\nabla_\theta \mu^T$ the $d_\theta \times m$ matrix whose columns are the gradients,
$\nabla_\theta \mu^T = \left[\nabla_\theta \mu_1, \dots, \nabla_\theta \mu_m\right]$. Thus the
symbol $\nabla_\theta$ denotes either a vector or a matrix, depending on whether the function to
which it is applied is scalar or vector-valued. With this convention, the matrix $\nabla_\theta
\mu^T$ is the transpose of the usual \emph{Jacobian matrix}.

\begin{proof}[Proof of Proposition \ref{prop:characterization-stationary-points}]
  Let $s^\star$ be a root of $\meanfield$, and set
  $\theta^\star= \functheta{s^\star}$. For any $s \in \calS$, the function
  $\theta \mapsto \ls(s;\theta)$ has a unique stationary point at
  $\functheta{s}$. Therefore,
  \begin{equation}
    \label{eq:keyrelation1}
    -\nabla_\theta \psi(\theta^\star) + \nabla_\theta \phi^T(\theta^\star) s^\star=0 \eqsp.
  \end{equation}
  Note that, from \emph{Fisher's identity},
  \begin{equation}
    \label{eq:FisherIdentity}
    \dly(y;\theta)= \PE_\theta \left[ \nabla_\theta \log \fx(X;\theta) \right| \left. Y=y \right] = -\nabla_\theta \psi(\theta) + \nabla_\theta
    \phi^T(\theta) \condexpsuffstat{y}{\theta} \eqsp.
  \end{equation}
  As $\nabla_\theta \kullback{\pi}{\fy_\theta} = \PE_\pi \left[ \dly(Y;\theta) \right]$, the latter relation implies
  \begin{equation}
    \label{eq:keyrelation2}
    \nabla_\theta \kullback{\pi}{\fy_\theta}= - \nabla_\theta \psi(\theta) + \nabla_\theta \phi^T(\theta) \PE_\pi \left[ \condexpsuffstat{Y}{\theta} \right] \eqsp.
  \end{equation}
  Since $\meanfield(s^\star)= 0$, $s^\star = \PE_\pi[ \condexpsuffstat{Y}{\theta^\star} ]$,
  and thus \eqref{eq:keyrelation1} and \eqref{eq:keyrelation2} imply
  \[
  \left. \nabla_\theta \kullback{\pi}{\fy_\theta} \right|_{\theta=
    \theta^\star}= - \nabla_\theta \psi(\theta^\star) + \nabla_\theta
  \phi^T(\theta^\star) s^\star= 0 \eqsp,
  \]
  establishing the first assertion. Conversely, suppose that $ \left.
    \nabla_\theta \kullback{\pi}{\fy_\theta} \right|_{\theta= \theta^\star}$
  $=0$ and set $s^\star= \PE_\pi[ \condexpsuffstat{Y}{\theta^\star}]$. By \eqref{eq:keyrelation2},
$$- \nabla_\theta \psi(\theta^\star) + \nabla_\theta \phi^T(\theta^\star) s^\star= 0 \eqsp.$$
Under Assumption~\ref{assum:expon}(c), the function $\theta
\mapsto -\psi(\theta) + \pscal{\phi(\theta)}{s^\star}$ has a unique stationary
point at $\functheta{s^\star}$, which is a maximum. Hence, $\theta^\star =
\functheta{s^\star}$ establishing the second assertion.
\end{proof}

\begin{proof}[Proof of Proposition \ref{prop:lyapunov-function-h}]
  Using \eqref{eq:keyrelation2} and the chain rule of differentiation,
  \begin{equation}
    \label{eq:gradient-lyapunov}
    \nabla_s \lyapunov(s) = - \nabla_s \functheta[T]{s} 
    \, \left\{ - \nabla_\theta \psi(\functheta{s}) + \nabla_\theta
      \phi^T(\functheta{s}) \PE_\pi \left[\condexpsuffstat{Y}{\functheta{s}} \right] \right\} \eqsp.
  \end{equation}
  For any $s \in \calS$, $\functheta{s}$ is the maximum of $\theta \mapsto \ls(s;\theta)$, thus,
  \begin{equation}
    \label{eq:keyrelation3}
    \left. \nabla_\theta \ls(s;\theta) \right|_{\theta= \functheta{s}} = 0 = - \nabla_\theta \psi(\functheta{s}) +
    \nabla_\theta \phi^T(\functheta{s}) s \eqsp.
  \end{equation}
  Plugging this relation into \eqref{eq:gradient-lyapunov} and using the
  definition \eqref{eq:recursiveEM-curvedexponential-meanfield}, we obtain
  \begin{equation}
    \label{eq:gradient-lyapunov-h}
    \nabla_s \lyapunov(s) = - \nabla_s \functheta[T]{s}  \nabla_\theta \phi^T (\functheta{s}) \meanfield(s) \eqsp.
  \end{equation}
  By differentiating the function $s \mapsto \Phi(s;\functheta{s})$ where
  $\Phi(s,\theta)\eqdef \nabla_\theta \ls(s; \theta)$, we obtain
  \[
  \nabla_s \Phi^T(s; \functheta{s}) = \nabla_s \Phi^T(s;\functheta{s}) +  \nabla_s \functheta[T]{s} \nabla_\theta \Phi^T(s,\functheta{s}) \eqsp.
  \]
  Since $\nabla_s \Phi^T(s;\theta)= \nabla_s \left[\nabla_\theta \ls(s;\theta)\right]^T =
  \left[\nabla_\theta \phi^T(\theta)\right]^T$ and $\nabla_\theta \Phi^T(s,\theta)= \nabla_\theta^2 \ell(s,\theta)$,
  the latter relation may be alternatively written as
  \[
  \nabla_s \left[\left. \nabla_\theta \ls(s; \theta) \right|_{\theta= \functheta{s}} \right]^T
  = \left[ \nabla_\theta \phi^T(\functheta{s}) \right]^T +
  \nabla_s \functheta[T]{s} \; \left. \nabla^2_\theta \ls(s;\theta) \right|_{\theta= \functheta{s}}  \eqsp.
  \]
  Because the function $s \mapsto \left. \nabla_\theta \ls(s; \theta)
  \right|_{\theta= \functheta{s}}$ is identically equal to 0, $\nabla_s
  [ \left. \nabla_\theta \ls(s; \theta) \right|_{\theta= \functheta{s}} ]^T= 0$, and
  thus,
  \begin{equation}
    \label{eq:derivative-htheta}
    \nabla_\theta \phi^T(\functheta{s}) = -\left. \nabla^2_\theta \ls(s;\theta)  \right|_{\theta= \functheta{s}} \left[ \nabla_s \functheta[T]{s} \right]^T \eqsp.
  \end{equation}
  Plugging this relation into \eqref{eq:gradient-lyapunov-h} yields
  \begin{equation}
    \label{eq:keyrelation4}
    \pscal{\nabla_s \lyapunov(s)}{\meanfield(s)} = \meanfield^T(s) \left[\nabla_\theta \phi^T (\functheta{s})\right]^T \left\{ \left. \nabla^2_\theta \ls(s;\theta)  \right|_{\theta= \functheta{s}} \right\}^{-1} \nabla_\theta \phi^T (\functheta{s}) \meanfield(s) \eqsp,
  \end{equation}
  where assumption~\ref{assum:expon}(c) implies that, for any $s \in \calS$, the matrix $\left. \nabla^2_\theta \ls(s;\theta)
  \right|_{\theta= \functheta{s}}$ is negative definite (and hence invertible). Hence, for any $s
  \in \calS$, $\pscal{\nabla_s \lyapunov(s)}{\meanfield(s)} \leq 0$
  with equality if and only if $\nabla_\theta \phi^T (\functheta{s}) \meanfield(s) = 0$.

  To deal with the equality case, assume that $s^\star$ is such that $\nabla_\theta \phi^T (\functheta{s^\star}) \meanfield(s^\star) = 0$, or equivalently that
  \[
     \nabla_\theta \phi^T (\functheta{s^\star}) \PE_\pi \left[ \condexpsuffstat{Y}{\functheta{s^\star}} \right] = \nabla_\theta \phi^T (\functheta{s^\star}) s^\star \eqsp .
  \]
  Under Assumptions~\ref{assum:expon}(a) and~\ref{assum:expon}(c), $\theta^\star =
  \functheta{s^\star}$ is the unique solution to the score equation~\eqref{eq:keyrelation3},
  that is, such that $\nabla_\theta \psi(\theta^\star) = \nabla_\theta \phi^T (\theta^\star)
  s^\star$. Hence, $\nabla_\theta \phi^T (\theta^\star) \PE_\pi \left[
    \condexpsuffstat{Y}{\theta^\star} \right] = \nabla_\theta \psi(\theta^\star)$, which implies
  that $\functheta{\PE_\pi \left[ \condexpsuffstat{Y}{\theta^\star} \right]} = \theta^\star$
  showing that $\theta^\star \in \mathcal{L}$ and thus, from
  Proposition~\ref{prop:characterization-stationary-points}, that $s^\star \in \calS$. The proof
  then follows upon noting that $s \mapsto \pscal{\nabla_s \lyapunov(s)}{\meanfield(s)}$ is
  continuous.
\end{proof}

\begin{proof}[Proof of Theorem \ref{theo:wp1convergencerecursiveEM}]
Under the stated assumptions, for any $\epsilon > 0$, there exists a compact $\calK \subset \calS$ and $n$, such that, 
$\PP \left( \bigcap_{k \geq n} \{ \hcondexpsuffstat_n \in \calK \} \right) \geq 1-\epsilon$.   Therefore, for any $\eta > 0$,
\begin{align*}
& \PP\left( \sup_{k \geq n} \left| \sum_{i=n}^k \gamma_i \xi_i \right| \geq \eta\right)   \\
& \qquad  \leq \PP \left( \sup_{k \geq n} \left| \sum_{i=n}^k \gamma_i \xi_i \1_{\calK}(\hcondexpsuffstat_i) \right| \geq \eta, \bigcap_{i \geq n} \{\hcondexpsuffstat_i \in \calK \} \right) 
+ \PP \left( \bigcup_{i \geq n} \{ \hcondexpsuffstat_i \notin  \calK \} \right) \eqsp, \\
&\qquad \leq \epsilon + \PP \left( \sup_{k \geq n} \left| \sum_{i=n}^k \gamma_i \xi_i \1_{\calK}(\hcondexpsuffstat_i) \right| \geq \eta \right) \eqsp.
\end{align*}
Note that $M_{n,k} = \sum_{i=n}^k \gamma_i \xi_i \1_{\calK}(\hcondexpsuffstat_i)$ is a
$L_2$-martingale, and that its angle-bracket is bounded by $ \sup_{s \in \calK} \PE_\pi[
|\condexpsuffstat{Y}{\functheta{s}}|^2] \sum_{i=n}^k \gamma_i^2 < \infty$.  Using Chebyshev's
inequality associated to the Doob martingale inequality, we conclude that
$$\PP\left( \sup_{k \geq n} |M_{n,k}|\geq \eta\right)  \leq 2 \eta^{-2} \sup_{s \in \calK} \PE_\pi[ |\condexpsuffstat{Y}{\functheta{s}}|^2] \sum_{i=n}^\infty \gamma_i^2 \eqsp,
$$
finally showing that $\lim\sup_n \sup_{k \geq n} |M_{n,k}| = 0$ with probability one. The proof is concluded by applying Theorem 2.3 of \cite{andrieu:moulines:priouret:2005} which states that the sequence of sufficient statistics defined by~\eqref{eq:recursiveEM-curvedexponential-RM} is then such that $\lim\sup_n d(\hcondexpsuffstat_n,\Gamma)= 0$ with probability one; the result on the sequence of parameter estimates $\htheta_n$ follows by continuity of $\bar{\theta}$.
\end{proof}

To prove Proposition \ref{prop:asymptoticequivalencerecursiveEMs}, we will make use of the following stability lemma.

\begin{lemma}
\label{lem:bounded-in-probability}
Let $p \geq 1$. Assume that for any compact subset $\mathcal{K} \subset \calS$, $\sup_{s \in \calK} \PE_\pi \left[ |\condexpsuffstat{Y}{\functheta{s}}|^p \right] < \infty$ for some $p>0$ and that $\PP_\pi$-\as\ $\lim \hcondexpsuffstat_n$ exists and belongs to $\calS$. Then, the sequence $\{ \condexpsuffstat{Y_{n+1}}{\functheta{\hcondexpsuffstat_n}} \}_{n \geq 0}$ is bounded in probability, \ie\
\[
\lim_{M \to \infty} \limsup_{n \to \infty} \PP\left( |\condexpsuffstat{Y_{n+1}}{\functheta{\hcondexpsuffstat_n}|}| \geq M \right) = 0 \eqsp.
\]
\end{lemma}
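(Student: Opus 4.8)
The plan is to exploit the conditional independence structure of the recursion. Set $T_{n+1} \eqdef \condexpsuffstat{Y_{n+1}}{\functheta{\hcondexpsuffstat_n}}$. Under $\PP_\pi$ the observation $Y_{n+1}$ is drawn from $\pi$ independently of $\mathcal{F}_n = \sigma(\hcondexpsuffstat_0,\{Y_i\}_{i=1}^n)$, whereas $\hcondexpsuffstat_n$ is $\mathcal{F}_n$-measurable. Writing $q_M(s) \eqdef \PP_\pi\bigl(|\condexpsuffstat{Y}{\functheta{s}}| \geq M\bigr)$, the substitution (freezing) rule for independent variables gives $\CPP{|T_{n+1}| \geq M}{\mathcal{F}_n} = q_M(\hcondexpsuffstat_n)$. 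Markov's inequality applied with the $p$-th moment then furnishes, for any compact $\calK \subset \calS$ and every $s \in \calK$, the bound $q_M(s) \leq M^{-p}\,C_\calK$ with $C_\calK \eqdef \sup_{s \in \calK}\PE_\pi[\condexpsuffstat[p]{Y}{\functheta{s}}] < \infty$ by hypothesis. This is the only place the moment assumption enters, and it controls the tail uniformly over the compact.

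Next I would split according to whether $\hcondexpsuffstat_n$ has entered $\calK$. Taking expectations of the conditional bound,
\[
\PP(|T_{n+1}| \geq M) \leq \PE\bigl[\1_\calK(\hcondexpsuffstat_n)\, q_M(\hcondexpsuffstat_n)\bigr] + \PP(\hcondexpsuffstat_n \notin \calK) \leq M^{-p} C_\calK + \PP(\hcondexpsuffstat_n \notin \calK) \eqsp .
\]
The first term is now uniform in $n$ and vanishes as $M \to \infty$ for any fixed $\calK$, so everything reduces to choosing $\calK$ so that the escape probability $\PP(\hcondexpsuffstat_n \notin \calK)$ is small for all large $n$.

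This last point is the main obstacle: the almost sure convergence $\hcondexpsuffstat_n \to s_\infty \in \calS$ only guarantees entrance into a compact after a \emph{random} time, which does not immediately bound $\PP(\hcondexpsuffstat_n \notin \calK)$ uniformly in $n$. To get around this I would fix $\epsilon > 0$ and, using that $\calS$ is open and $s_\infty \in \calS$ almost surely, exhaust $\calS$ by compacts to select a bounded open set $U$ with compact closure $\calK \eqdef \overline{U} \subset \calS$ and $\PP(s_\infty \in U) \geq 1-\epsilon$. On the event $\{s_\infty \in U\}$, openness of $U$ combined with $\hcondexpsuffstat_n \to s_\infty$ forces $\hcondexpsuffstat_n \in U \subset \calK$ for all $n$ large enough, so $\1_{\{\hcondexpsuffstat_n \notin \calK\}}\,\1_{\{s_\infty \in U\}} \to 0$ almost surely. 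Dominated convergence then yields $\limsup_n \PP(\hcondexpsuffstat_n \notin \calK) \leq \PP(s_\infty \notin U) \leq \epsilon$.

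Combining the two estimates for this choice $\calK = \calK(\epsilon)$ gives $\limsup_n \PP(|T_{n+1}| \geq M) \leq M^{-p} C_\calK + \epsilon$, and letting $M \to \infty$ leaves $\limsup_{M\to\infty}\limsup_n \PP(|T_{n+1}| \geq M) \leq \epsilon$. Since $\epsilon$ is arbitrary and $M \mapsto \limsup_n \PP(|T_{n+1}|\geq M)$ is nonincreasing (so the limit in $M$ exists and equals the $\limsup$), the double limit is zero, which is precisely the asserted boundedness in probability.
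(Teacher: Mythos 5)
Your proof is correct and follows essentially the same route as the paper's: split on the event $\{\hcondexpsuffstat_n \in \calK\}$ for a compact $\calK \subset \calS$, bound the tail on that event via Markov's inequality with the uniform $p$-th moment over $\calK$, and control the escape probability $\PP(\hcondexpsuffstat_n \notin \calK)$ through the almost sure convergence of $\hcondexpsuffstat_n$ to a limit in $\calS$. You are in fact somewhat more careful than the paper, whose proof stops at the displayed $\limsup$ inequality and leaves implicit both the exhaustion of $\calS$ by compacts (your open set $U$ with compact closure $\overline{U} \subset \calS$, which also sidesteps boundary issues) and the final passage to the double limit.
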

\begin{proof}
Let $\calK$ be a compact subset of $\calS$. We may decompose $\PP\left( |\condexpsuffstat{Y_{n+1}}{\functheta{\hcondexpsuffstat_n}|}| \geq M \right)$
as follows
\begin{align*}
\PP\left( |\condexpsuffstat{Y_{n+1}}{\functheta{\hcondexpsuffstat_n}|}| \geq M \right)
&\leq \PP( \hcondexpsuffstat_n \not \in \calK) +  \PP \left( |\condexpsuffstat{Y_{n+1}}{\functheta{\hcondexpsuffstat_n}|}| \geq M , \hcondexpsuffstat_n \in \calK \right) \eqsp , \\
&\leq \PP( \hcondexpsuffstat_n \not \in \calK) + M^{-p} \sup_{s \in \calK} \PE_\pi \left[ |S(Y, \functheta{s})|^p \right] \PP\left( \hcondexpsuffstat_n \in \calK \right) \eqsp,
\end{align*}
which implies that
\[
\limsup_{n \to \infty} \PP\left( |\condexpsuffstat{Y_{n+1}}{\functheta{\hcondexpsuffstat_n}|}| \geq M \right) \leq
\PP(\lim_{n \to \infty} \hcondexpsuffstat_n \not \in \calK) + M^{-p} \sup_{s \in \calK} \PE_\pi \left[ |S(Y, \functheta{s})|^p \right] \eqsp.
\]
\end{proof}

\begin{proof}[Proof of Proposition \ref{prop:asymptoticequivalencerecursiveEMs}]
  A Taylor expansion with integral remainder shows that
  \begin{align}
    \label{eq:TaylorExpansion} \nonumber
    \htheta_{n+1}&= \functheta{ \hcondexpsuffstat_n + \gamma_{n+1} \left( \condexpsuffstat{Y_{n+1}}{\htheta_n} - \hcondexpsuffstat_n \right) }\\
    &= \htheta_n + \gamma_{n+1} \left[\nabla_s \functheta[T]{\hcondexpsuffstat_n}\right]^T \left(
      \condexpsuffstat{Y_{n+1}}{\htheta_n} - \hcondexpsuffstat_n \right) +
       \gamma_{n+1} r_{n+1} \eqsp,
  \end{align}
  where the remainder $r_{n+1}$ is given by
  \begin{multline}
    \label{eq:RemainderTaylorExpansion}
    r^T_{n+1} \eqdef \left( \condexpsuffstat{Y_{n+1}}{\htheta_n} - \hcondexpsuffstat_n \right)^T \\
     \times \int_0^1 \left[\nabla_s \functheta[T]{\hcondexpsuffstat_{n} + t \gamma_{n+1} \left\{ \condexpsuffstat{Y_{n+1}}{\htheta_n} - \hcondexpsuffstat_n \right\} } -
    \nabla_s \functheta[T]{\hcondexpsuffstat_{n}} \right]
      \rmd t \eqsp .
  \end{multline}

  We will first show that $\lim_{n \to \infty} r_{n} = 0$ \as\ 
  Lemma \ref{lem:bounded-in-probability} shows that $\{ \condexpsuffstat{Y_{n+1}}{\htheta_n} \}_{n \geq
    0}$ is bounded in probability, which we denote by $\condexpsuffstat{Y_{n+1}}{\htheta_n} = O_\PP(1)$.  Under the assumption of
  Theorem \ref{theo:wp1convergencerecursiveEM}, $\hcondexpsuffstat_n =
  O_\PP(1)$, which implies that $\condexpsuffstat{Y_{n+1}}{\htheta_n} -
  \hcondexpsuffstat_n = O_\PP(1)$.
  Choose $\epsilon > 0$ and then a compact subset $\calK$ and a constant $M$ large enough such that
  \begin{equation}
  \label{eq:definition-compact-and-M}
  \limsup_{n \to \infty} \PP\left( \hcondexpsuffstat_n \not \in \calK \right) + \limsup_{n \to \infty} \PP\left(\left| \condexpsuffstat{Y_{n+1}}{\htheta_n} - \hcondexpsuffstat_n \right| \geq M\right) \leq \epsilon \eqsp .
  \end{equation}
  Since the function $\nabla_s \functheta{\cdot}$ is assumed continuous, it is uniformly continuous over every compact subset, \ie, there exists a constant $\delta_0$ such that
  $\calK_{\delta_0} \eqdef \left\{ s \in \calS, d(s,\calK) \leq \delta_0 \right\} \subset \calS$ and
  \begin{equation} \label{eq:localuniformcontinuityfunctheta}
  \sup_{|h| \leq \delta_0, s \in \calK} |\nabla_s \functheta{s+h} - \nabla_s
  \functheta{s}| \leq \epsilon \eqsp.
  \end{equation}
Since $\lim_{n \to \infty} \gamma_n = 0$ and
  $( \condexpsuffstat{Y_{n+1}}{\htheta_n} - \hcondexpsuffstat_n )$
  is bounded in probability, $\gamma_{n+1} (
    \condexpsuffstat{Y_{n+1}}{\htheta_n} - \hcondexpsuffstat_n )$
  converges to zero in probability, which we denote by $\gamma_{n+1} (
    \condexpsuffstat{Y_{n+1}}{\htheta_n} - \hcondexpsuffstat_n ) =
  o_\PP(1)$. For $\delta_0>0$ satisfying (\ref{eq:localuniformcontinuityfunctheta}), this implies in particular that
  \[
    \lim_{n \to \infty} \PP\left( \gamma_{n+1}
    \left| \condexpsuffstat{Y_{n+1}}{\htheta_n} - \hcondexpsuffstat_n \right|
    \geq \delta_0\right)= 0 \eqsp .
  \]
  Therefore,
  \begin{multline*}
    \limsup_{n \to \infty} \PP\left(| r_{n+1}| \geq M \epsilon\right) \leq
    \limsup_{n \to \infty} \PP\left( \gamma_{n+1} \left|
        \condexpsuffstat{Y_{n+1}}{\htheta_n} - \hcondexpsuffstat_n \right| \geq
      \delta_0 \right) \\ + \limsup_{n \to \infty} \PP\left( \hcondexpsuffstat_n \not \in \calK \right) +
      \limsup_{n \to \infty} \PP \left(\left| \condexpsuffstat{Y_{n+1}}{\htheta_n} - \hcondexpsuffstat_n \right| \geq M\right) \leq \epsilon \eqsp,
  \end{multline*}
  showing that $r_n = o_\PP(1)$.

We now proceed with the first order term in \eqref{eq:TaylorExpansion}. From \eqref{eq:derivative-htheta} we have,
  \begin{equation}
    \label{eq:derivative-htheta-2}
    \nabla_s \functheta[T]{s} = \left[ \nabla_\theta \phi^T(\functheta{s}) \right]^T \left[ - \left. \nabla^2_\theta \ls(s ; \theta) \right|_{\theta=\functheta{s}} \right]^{-1}   \eqsp.
  \end{equation}
In addition, \eqref{eq:FisherIdentity} and
  \eqref{eq:keyrelation3} show that
  \begin{equation}
    \label{eq:FisherIdentity-2}
    \nabla_\theta \phi^T \left(  \functheta{s} \right) \left\{ \condexpsuffstat{y}{\functheta{s}} -s \right\} = \dly(y ; \functheta{s}) \eqsp.
  \end{equation} 
  Therefore, by combining \eqref{eq:derivative-htheta-2} and \eqref{eq:FisherIdentity-2}, the first order term in \eqref{eq:TaylorExpansion} may be rewritten as
  \begin{multline}
    \left[ \nabla_s \functheta[T]{\hcondexpsuffstat_n} \right]^T\left(
      \condexpsuffstat{Y_{n+1}}{\functheta{\hcondexpsuffstat_n}} -
      \hcondexpsuffstat_n \right) = \FIMcomplete[-1]{\pi}{\htheta_n}
    \dly(Y_{n+1} ; \htheta_n) \\ + \left \{ \left[ - \left. \nabla^2_\theta
          \ls(\hcondexpsuffstat_n ; \theta) \right|_{\theta=\htheta_n}
      \right]^{-1} - \FIMcomplete[-1]{\pi}{\htheta_n} \right\} \dly(Y_{n+1} ;
    \htheta_n) \eqsp.
    \label{eq:tmp:first-order-term}
  \end{multline}
  By definition, the complete data FIM may be rewritten as $\FIMcomplete{\pi}{\theta} = \left. \nabla^2_\theta \ls \left( s; \theta \right) \right|_{s =
    \PE_\pi[\condexpsuffstat{Y}{\theta}]}$. Note also that $\htheta_n$ converges a.s. to $\theta^\star$ and that $\FIMcomplete{\pi}{\theta^\star}$ is assumed to be positive definite. Hence, to ensure that the term between braces in~\eqref{eq:tmp:first-order-term} may be neglected, we need to show that:
  \begin{equation}
    \label{eq:negligeability}
    \left. \nabla^2_\theta \ls( \hcondexpsuffstat_n; \theta) \right|_{\theta= \htheta_n} -  \left. \nabla^2_\theta \ls( s; \theta) \right|_{(s,\theta)=
      (\PE[\condexpsuffstat{Y_{n+1}}{\htheta_n}|\mcf_n],\htheta_n)} = o_\PP(1) \eqsp.
  \end{equation}
   Since the function $(s,\theta) \mapsto \nabla^2_\theta \ls(s ; \theta)$ is continuous,
   there exists $\delta_1 > 0$ such that, $\calK_{\delta_1}= \{ s \in \calS, d(s,\calK) \leq \delta_1 \} \subset \calS$
   and
  \[
  \sup_{|h| \leq \delta_1,  s \in \calK} \left| \nabla_\theta^2 \ls( s+h; \functheta{s}) - \nabla_\theta^2 \ls(s;\functheta{s}) \right| \leq \epsilon \eqsp,
  \]
  where the set $\calK$ is defined in \eqref{eq:definition-compact-and-M}. Under the stated assumption
  $\lim_{n \to \infty} d(\hcondexpsuffstat_n,\mathcal{L})= 0$ \as, which implies
  that
  \[
  \lim_{n \to \infty} \meanfield( \hcondexpsuffstat_n)= \lim_{n \to \infty} \{
  \PE[\condexpsuffstat{Y_{n+1}}{\functheta{\hcondexpsuffstat_n}}| \mcf_{n}] - \hcondexpsuffstat_n \}= 0 \eqsp, \quad
  \text{\as\ }
  \]
  By combining the latter two equations, we therefore obtain
  \begin{multline*}
    \limsup_{n \to \infty}  \PP \left( \left|\left. \nabla^2_\theta \ls( \hcondexpsuffstat_n; \theta)
      \right|_{\theta= \htheta_n} - \left. \nabla^2_\theta \ls( s; \theta)
      \right|_{(s,\theta)=
        (\PE[\condexpsuffstat{Y_{n+1}}{\htheta_n}|\mcf_n],\htheta_n)} \right| \geq \epsilon \right) \leq \\
    \limsup_{n \to \infty} \PP \left( \hcondexpsuffstat_n \not \in \calK \right) + \limsup_{n \to \infty} \PP
    \left( | \meanfield( \hcondexpsuffstat_n) | \geq \delta_0 \right) \leq \epsilon \eqsp,
  \end{multline*}
  showing \eqref{eq:negligeability}.
\end{proof}

\begin{proof}[Proof of Theorem \ref{theo:weak-convergence-REM}]
  We use the definition of the recursive
  EM sequence given by Proposition
  \ref{prop:asymptoticequivalencerecursiveEMs}. The mean field associated to
  this sequence is given by $\meanfield_\theta(\theta) \eqdef -
  \FIMcomplete[-1]{\pi}{\theta} \nabla_\theta \kullback{\pi}{\fy_\theta}$.  The
  Jacobian of this vector field at $\theta^\star$ is equal to
  $H(\theta^\star)$.  Since $\theta^\star$ is a (possibly local) minimum of the
  Kullback-Leibler divergence, the matrix $ \left. \nabla^2_\theta
    \kullback{\pi}{\fy_\theta} \right|_{\theta= \theta^\star}$ is positive
  definite.  Because the two matrices $\FIMcomplete[-1]{\pi}{\theta^\star}
  \left. \nabla^2_\theta \kullback{\pi}{\fy_\theta} \right|_{\theta=\theta^\star}$ and
  $\FIMcomplete[-1/2]{\pi}{\theta^\star} \left. \nabla^2_\theta
  \kullback{\pi}{\fy_\theta} \right|_{\theta=\theta^\star}
  \FIMcomplete[-1/2]{\pi}{\theta^\star}$
  have the same eigenvalues, counting
  multiplicities, the eigenvalues of the matrix
  $H(\theta^\star) = \FIMcomplete[-1]{\pi}{\theta^\star} \left. \nabla^2_\theta
  \kullback{\pi}{\fy_\theta} \right|_{\theta=\theta^\star}$
  are all real and strictly positive. This shows
  the first assertion of the theorem; the proof of
  the second assertion follows directly from \citealp[Theorem 1]{pelletier:1998}.
\end{proof}

\section*{Acknowledgement}
The authors thank Maurice Charbit and Christophe Andrieu for
precious feedback during the early stages of this work.

\end{document}